\newtheorem{theorem}{Theorem}[section]
\newtheorem{lemma}[theorem]{Lemma}
\newtheorem{proposition}[theorem]{Proposition}
\newtheorem{corollary}[theorem]{Corollary}
\title{Eigenvalues of weakly balanced signed graphs and graphs with negative cliques}
\author{Ranveer Singh \thanks{ Department of Mathematics, Technion-Israel Institute of Technology Haifa, Israel - 31000. email: \texttt{ranveer@iitj.ac.in}}\\ R. B. Bapat \thanks{Stat-Math Unit, Indian Statistical Institute Delhi, 7-SJSS Marg, New Delhi - 110 016. email: \texttt{rbb@isid.ac.in}}}
\begin{document}
    %\begin{titlepage}
%        \pagecolor{green!50}
        \maketitle
    %\end{titlepage}
%
%    \pagecolor{white}
%    \doublespacing
    
\begin{abstract}   
In a signed graph $G$, an induced subgraph is called a negative clique if it is a complete graph and all of its edges are negative. In this paper, we give the characteristic polynomials and the eigenvalues of some signed graphs having negative cliques. This includes cycle graphs, path graphs,  complete graphs with vertex-disjoint negative cliques of different orders, and star block graphs with negative cliques. Interestingly, if we reverse the signs of the edges of these graphs, we get the families of weakly balanced signed graphs, thus the eigenvalues of wide classes of weakly balanced signed graphs are also calculated. In social network theory, the eigenvalues of the signed graphs play an important role in determining their stability and developing the measures for the degree of balance. 
\end{abstract}

\emph{keywords:} Signed graph, weakly balanced graph, linear subdigraphs, negative clique. 

\emph{AMS Subject Classifications.}  05C22, 68R10.
\section{Introduction}
In 1956, Cartwright and Frank Harary modeled the cognitive structure of balance in signed social networks by introducing the concept of signed graphs \cite{b19,b21}.  In a signed social graph, the vertices represent individuals and a positive edge (the edge with a positive sign) between two vertices reflects the existence of liking relationship, whereas, a negative edge (the edge with a negative sign) represents disliking. After the introduction of signed graphs, several attempts have been made for investigating a possible connection between the eigenvalues and balance of signed graphs, for example, see \cite{singh2017measuring, kunegis2014applications,b27,singh2017mathcal}. 

A graph $G$ consists of a finite set of vertices $V(G)$ and set of edges $E(G)$ consisting of distinct, unordered pairs of vertices. Thus, $(i,j)$ or $(j,i)$ represents an edge between vertices $i,j \in V(G)$, and $i$, $j$ are called adjacent vertices. The number of vertices in $G$ is called its order. If $G$ is equipped with a weight function $f: E(G) \rightarrow \{-1,0,1\}$, then $G$ is called a signed graph. Thus, a signed graph may have positive, negative edges with weights $1$, $-1$, respectively. %All the graphs considered in this article are signed graphs. 
Let $G$ be a signed graph on $n$ vertices.   Then, the adjacency matrix $A$ of order $n\times n$ associated with $G$ is defined by $$A_{i,j}=\begin{cases}
1 & \mbox{if the vertices $i,j$ are linked with a positive edge,}\\
-1& \mbox{if the vertices $i,j$ are linked with a negative edge,}\\
0 & \mbox{if the vertices $i,j$ are not linked,}
\end{cases} $$ where, $1\leq i,j\leq n$. The eigenvalues of $G$ are the eigenvalues of its adjacency matrix $A$.
The degree of a vertex $i$ in $G$ is defined as $d_i=\sum_j |A_{i,j}|$. Thus, it equals to the number of incident edges to $i$, irrespective of its signs.

%%%% A subgraph of $G$ is a graph $H$, such that, $V(H) \subseteq V(G)$ and $E(H) \subseteq E(G)$.  
%%%%The subgraph $H$ is called spanning subgraph of $G$ if it covers all the vertices in $G$. Subgraph $H$ is an induced subgraph of $G$ if $i, j \in V(H)$ and $(i,j) \in E(G)$ indicate $(i,j) \in E(H)$. Two subgraphs $H_1$ and $H_2$ are called vertex-disjoint subgraphs if $V(H_1)\cap V(H_2)=\emptyset$. A path of length $k$ between two vertices $v_1$ and $v_k$ is a sequence of distinct vertices $v_{1}, v_{2}, \hdots, v_{k-1}, v_{k}$, such that, for all $i=1,2, \hdots, k-1$, $(v_{i}, v_{i +1}) \in E(G)$. We call $G$ to be connected, if there exists a path between any two distinct vertices. A component of $G$ is a maximally connected subgraph of $G$. A cut-vertex of $G$ is a vertex whose removal results in the increase in number of components in $G$.

\begin{figure}
    \begin{subfigure}[b]{0.24\textwidth}
        \centering
        
            \begin{tikzpicture}[scale=0.35]

                     \draw  node[draw,circle,scale=0.5] (1) at (4,0) {$v_1$};
                    
                     \draw  node[draw,circle,scale=1](2) at (0,4) {2};
                     
                     \draw  node[draw,circle,scale=1](3) at (-4,0) {3};
                     
                     \draw  node[draw,circle,scale=1](4) at (0,-4) {4};
 
  \tikzset{edge/.style = {{dashed}= latex'}}                  
     \draw[edge] (1) to (2);
      \draw[edge] (3) to (4);
                        
           \tikzset{edge/.style = {- = latex'}}
           \draw[edge] (1) to (4);  
  \draw[edge] (2) to (3); 
                     
      \end{tikzpicture}
        
        \caption{Balanced $C_4$}
        \label{bal}
    \end{subfigure}
    \begin{subfigure}[b]{0.24\textwidth}
    \centering
        
            \begin{tikzpicture} [scale=0.35]

                    \draw  node[draw,circle,scale=1] (1) at (4,0) {1};
                   
                    \draw  node[draw,circle,scale=1](2) at (0,4) {2};
                    
                    \draw  node[draw,circle,scale=1](3) at (-4,0) {3};
                    
                    \draw  node[draw,circle,scale=1](4) at (0,-4) {4};
    \tikzset{edge/.style = {{dashed}= latex'}}                  
    \draw[edge] (2) to (3);                     
          \tikzset{edge/.style = {- = latex'}}
          \draw[edge] (1) to (4); 
          \draw[edge] (1) to (2);
     \draw[edge] (3) to (4);         
            \end{tikzpicture}
        
        \caption{Unbalanced $C_4$}   
        \label{unbal}
    \end{subfigure}
    \begin{subfigure}[b]{0.24\textwidth}
        \centering
       
            \begin{tikzpicture} [scale=0.4]
                
                \tikzset{VertexStyle/.style = {shape = circle,fill = black,minimum size = 19pt}}
                \draw  node[draw,circle,scale=1] (1) at (4,0) {1};
                \draw  node[draw,circle,scale=1] (2) at (2.8284,2.8284) {2};
                \draw  node[draw,circle,scale=1](3) at (0,4) {3};
                \draw  node[draw,circle,scale=1](4) at (-2.8284,2.8284) {4};
                \draw  node[draw,circle,scale=1](5) at (-4,0) {5};
                \draw  node[draw,circle,scale=1](6) at (-2.8284,-2.8284) {6};
                \draw  node[draw,circle,scale=1](7) at (0,-4) {7};
                \draw  node[draw,circle,scale=1](8) at (2.8284,-2.8284) {8};
                %\vertex[label=$5$](5) at (\cos \pi/4,\cos \pi/4) {};

\tikzset{edge/.style = {{dashed}= latex'}} 
  \draw[edge] (2) to (3);
   \draw[edge] (2) to (4); 
  \draw[edge] (4) to (3);
\draw[edge] (6) to (7);
   \draw[edge] (6) to (8); 
  \draw[edge] (7) to (8);  

  \tikzset{edge/.style = {- = latex'}}
 \draw[edge] (1) to (2);
   \draw[edge] (1) to (3); 
  \draw[edge] (1) to (4);
\draw[edge] (1) to (5);
   \draw[edge] (1) to (6); 
  \draw[edge] (1) to (7);
\draw[edge] (1) to (8);

\draw[edge] (2) to (5);
  \draw[edge] (2) to (6); 
  \draw[edge] (2) to (7);
\draw[edge] (2) to (8);

\draw[edge] (3) to (5);
  \draw[edge] (3) to (6); 
  \draw[edge] (3) to (7);
\draw[edge] (3) to (8);

\draw[edge] (4) to (5);
  \draw[edge] (4) to (6); 
  \draw[edge] (4) to (7);
\draw[edge] (4) to (8);

 \draw[edge] (5) to (6); 
  \draw[edge] (5) to (7);
\draw[edge] (5) to (8);

            \end{tikzpicture}
        
        \caption{$K^{2,3}_8$}
        \label{comp}
    \end{subfigure}       
    \begin{subfigure}[b]{0.24\textwidth}
            \centering
            
    \begin{tikzpicture} [scale=0.35]
                    
                    %\tikzset{VertexStyle/.style = {shape = circle,fill = black,minimum size = 19pt}}
                    \draw  node[draw,circle,scale=1] (1) at (4,0) {1};
                    \draw  node[draw,circle,scale=1] (2) at (2.8284,2.8284) {2};
                    \draw  node[draw,circle,scale=1](3) at (0,4) {3};
                    \draw  node[draw,circle,scale=1](4) at (-2.8284,2.8284) {4};
                    \draw  node[draw,circle,scale=1](5) at (-4,0) {5};
                    \draw  node[draw,circle,scale=1](6) at (-2.8284,-2.8284) {6};
                    \draw  node[draw,circle,scale=1](7) at (0,-4) {7};
                    \draw  node[draw,circle,scale=1](8) at (2.8284,-2.8284) {8};
                    \draw  node[draw,circle,scale=1](9) at (0,0) {9};
                    %\vertex[label=$5$](5) at (\cos \pi/4,\cos \pi/4) {};

\tikzset{edge/.style = {{dashed}= latex'}} 
  \draw[edge] (4) to (3);
   \draw[edge] (3) to (9); 
  \draw[edge] (4) to (9);
\draw[edge] (9) to (7);
   \draw[edge] (9) to (8); 
  \draw[edge] (7) to (8);  

  \tikzset{edge/.style = {- = latex'}}
 \draw[edge] (1) to (2);
   \draw[edge] (2) to (9); 
  \draw[edge] (1) to (9);
\draw[edge] (9) to (5);
   \draw[edge] (5) to (6);
   \draw[edge] (9) to (6);                        
                \end{tikzpicture} 
    
    \caption{3-regular star block graph}
            \label{star}
        \end{subfigure}

\caption{Examples: The dotted lines show negative edges (weight -1).} 
\label{example}
\end{figure}
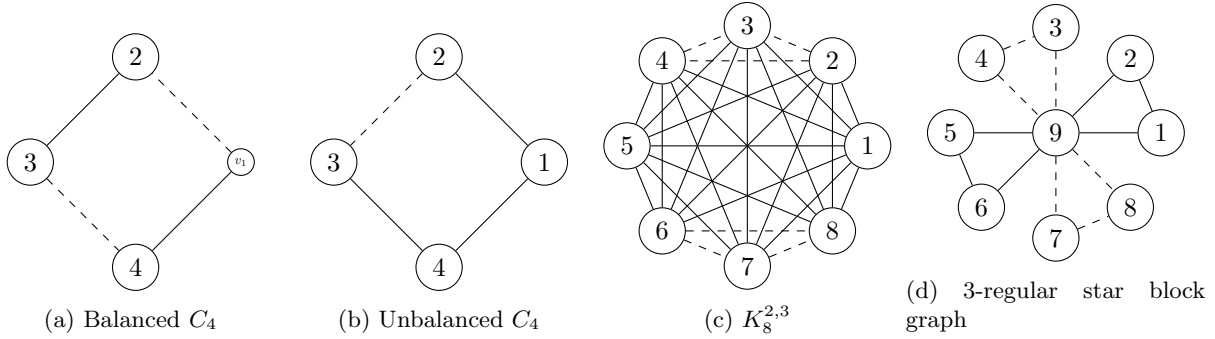

 We denote a cycle graph on $n$ vertices by $C_{n}$ or $n$-cycle. The adjacency matrix $A$ of $C_{n}$ is given by $A_{i,i+1}=A_{i+1,i}\in \{1,-1\},$ $i=1,2, \hdots, n-1$ and $A_{n,1}=A_{1,n}\in \{1,-1\}$, all other entries of $A$ are zero. Moreover, the sign of $C_n$ is defined as the product of signs of its edges. If the sign of $C_n$ is positive it is called balanced cycle, otherwise, it is called an unbalanced cycle  \cite{b21}. Examples of a balanced and an unbalanced $C_4$ are shown in Figure (\ref{bal}), (\ref{unbal}), respectively. We denote a  tree on $n$ vertices by $T_n.$ The path graph on $n$ vertices is denoted by $P_n$. \\

Let $G$ be a signed graph. If each cycle of $G$ is balanced, then $G$ is called a balanced signed graph, otherwise, an unbalanced signed graph. A tree is balanced. The balanced signed graphs display an interesting graph partitioning phenomenon as stated by the following theorem. 

\begin{theorem} \cite{b17}
A signed graph $G$ is balanced if and only if, either all of its edges are positive or the vertices can be partitioned into two subsets such that each positive edge joins vertices
in the same subset and each negative edge joins vertices in different subsets.
\end{theorem}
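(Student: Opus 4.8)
The plan is to prove both implications of this classical balance theorem separately, treating the ``all edges positive'' case as the special instance of the partition in which one part is empty (so the genuine content is the partition condition). The reverse direction is the easier one, so I would dispatch it first. Assume the vertex set admits a partition $V(G) = V_1 \cup V_2$ with positive edges inside the parts and negative edges across them. Take any cycle $C = v_1 v_2 \cdots v_k v_1$. As we traverse $C$, an edge is negative precisely when its two endpoints lie in different parts, i.e.\ exactly when the traversal switches sides. Since $C$ is closed, the number of side-switches must be even, so the number of negative edges on $C$ is even, and hence the sign of $C$, being $(-1)$ raised to the number of negative edges, is positive. As $C$ was arbitrary, every cycle is balanced and $G$ is balanced.

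For the forward direction I would first reduce to the connected case: if the claim holds on each connected component, then since there are no edges between components, the two labels chosen per component can be amalgamated into a single partition of $V(G)$ (absorbing whichever label we like into $V_1$ or $V_2$). So assume $G$ is connected and balanced. Fix a spanning tree $T$ of $G$ and a root $r$, and for each vertex $v$ define the sign potential $\sigma(v) \in \{+1,-1\}$ to be the product of the edge signs along the unique $r$-to-$v$ path in $T$, with $\sigma(r) = +1$. Set $V_1 = \{v : \sigma(v) = +1\}$ and $V_2 = \{v : \sigma(v) = -1\}$. This partition is well defined because tree paths are unique.

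It remains to verify that every edge of $G$ respects the partition, and this is where balance enters. The key auxiliary fact is that for any two vertices $u,v$ the product of edge signs along the $T$-path joining them equals $\sigma(u)\sigma(v)$; this holds because the $r$-to-$u$ and $r$-to-$v$ paths share a common initial segment whose contribution squares to $1$. For a tree edge $(u,v)$ this immediately gives $\operatorname{sign}(u,v) = \sigma(u)\sigma(v)$, so a positive tree edge keeps $u,v$ in the same part and a negative one separates them. For a non-tree edge $(u,v)$, adding it to $T$ creates a unique fundamental cycle consisting of this edge together with the $T$-path from $u$ to $v$. Because $G$ is balanced, this cycle has positive sign, so $\operatorname{sign}(u,v)$ times the path product is $+1$, and combining with the auxiliary fact again yields $\operatorname{sign}(u,v) = \sigma(u)\sigma(v)$. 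Thus every edge --- tree or not --- is positive exactly when its endpoints lie in the same part and negative exactly when they lie in different parts, which is the desired partition.

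The main obstacle is entirely in the forward direction: one must produce a globally consistent two-coloring from only the local, per-cycle balance hypothesis. The spanning-tree potential $\sigma$ makes the construction automatically consistent on $T$, and the content of the theorem is that balance is precisely the condition guaranteeing that the non-tree edges cannot create a contradiction. I expect the only delicate bookkeeping to be the identity relating the path product to $\sigma(u)\sigma(v)$ and the gluing step across components; everything else is routine parity.
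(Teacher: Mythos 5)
The paper does not actually prove this statement: it is quoted verbatim from Harary's structure theorem (the citation \cite{b17}) and used as background, so there is no in-paper argument to compare yours against. Judged on its own, your proof is correct and complete. The reverse direction via counting side-switches around a closed walk is the standard parity argument, and the forward direction via a spanning-tree potential $\sigma(v)$ (product of signs along the root-to-$v$ tree path), the identity that the $T$-path product between $u$ and $v$ equals $\sigma(u)\sigma(v)$, and the fundamental-cycle check for non-tree edges is exactly the classical proof of Harary's theorem. The reduction to connected components and the treatment of the ``all edges positive'' alternative as the case of an empty part are both handled cleanly. No gaps.
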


In 1967, Davis \cite{b20} gave a generalization of balanced signed graphs, which are known as weakly balanced signed graphs. A signed graph is called a weakly balanced graph if and only if, either all of its edges are positive or the vertices can be partitioned into $k\geq 2$ vertex subsets such that each positive edge joins vertices
in the same subset and each negative edge joins vertices in different subsets. The necessary and sufficient condition for a signed graph to be weakly balanced is that it should not have any cycle with exactly one negative edge \cite{b20}.

 When each edge of a clique is negative we call it a negative clique. Similarly, if each edge of a clique is positive, then we call it a positive clique. We denote a complete graph on $n$ vertices having each edge positive,  by $K_n$. By $K^{m,r}_n$, we denote a  complete graph on $n$ vertices having a $m$   vertex-disjoint negative cliques each of order $r$, and all the other edges positive except those are in the negative cliques. Example of a $K^{2,3}_8$ graph is given in Figure \ref{comp}, where two vertex-disjoint negative cliques, each of order 3 are on vertex-sets \{2,3,4\} and \{6,7,8\}, respectively. We also consider the complete graphs having vertex-disjoint negative cliques of different orders such that the negative cliques cover the whole vertex-set.  A block in a signed graph $G$ is a maximal subgraph which has no cut-vertex. If each block of $G$ is a complete graph, then $G$ is called block graph. For block graphs without negative edges see \cite{bapat2014adjacency}. If block graph $G$ has at most one cut-vertex, then we call it star block graph. We consider a star block graph having $k$  blocks each having $r$  vertices. We call it a $r$-regular star block graph. An example of a $3$-regular star block graph is given in Figure \ref{star}.  It is to be noted that, for all the above signed graphs (except $C_n$ with exactly one positive edge), if we reverse the signs of their edges, we get weakly balanced signed graphs. Thus negative of the eigenvalues of the above graphs give the eigenvalues of corresponding weakly balanced graphs.    

 The characteristic polynomial of a square matrix $A$ of order $n$ is the polynomial defined by
$\det \left( A-\lambda I \right),$
where $I$ denotes the $n\times n$ identity matrix. We denote the characteristic polynomial of $A$ by $\phi(A)$. The characteristic polynomial of signed graph $G$, denoted by $\phi(G)$, is characteristic polynomial of its adjacency matrix $A$ that is $\phi(G)=\phi(A)$. The eigenvalues of a matrix $A$ are roots of the characteristic polynomial $\det(A-\lambda I)$. The spectrum of a signed graph $G$ is set of the eigenvalues of its adjacency matrix along with their multiplicities.  For convenience, we can relabel the vertices in graph $G$. In graph theory, these relabelling are captured by permutation similarity of adjacency matrix $A$. The determinant of permutation matrices is equal to $\pm 1$. Thus, relabelling on vertex-set keep the determinant, and characteristic polynomial unchanged. Not to mention that the eigenvalues of signed $C_n$ are given in literature \cite{germina2010signed, germina2011products} by different proof techniques. Here we give their characteristic polynomial using the matching concept, the eigenvalues and the determinant can be easily deduced from it.

\subsection{Matchings and Coates digraph}
First, we modify some preliminaries from \cite{b4} for signed graphs. A matching in a signed graph $G$ is a collection of edges no two of which have a vertex in common. The largest number of edges in a matching in $G$ is the matching number $m(G)$. A matching with $k$ edges is called a $k$-matching. A perfect matching of $G$ also called a 1-factor, is a matching that covers all vertices of $G$. 

The Coates digraph $D(A)$ generated from a matrix $A$ of order $n$ has $n$ vertices labelled by $1, 2,\hdots, n$ and for each pair of such vertices $i, j$ a directed edge exists from $j$ to $i$ of weight $A_{i,j}$ \cite{b4}. The elements of the main diagonal of $A$ corresponds to loops at vertices in $D(A)$. If diagonal elements of $A$ are zero, then no loops are considered on corresponding vertices of $D(A).$   A linear subdigraph of $D(A)$ is a spanning subdigraph of $D(A)$ in which each vertex has indegree 1 and outdegree 1 that is exactly one edge into each vertex and exactly one (possibly the same, in the case of the loop) out of each vertex. Thus a linear subdigraph consists of a spanning collection of pairwise vertex-disjoint cycles. The weight of a linear subdigraph is the product of the weights of the edges in it. For example, the Coates digraph representation of the matrix $A=\begin{bmatrix}
 a_{11} & a_{12} \\
 a_{21}& a_{22}
\end{bmatrix}$ is given in Figure \ref{fig2a}.

By the Coates digraph of a signed graph, we mean the Coates digraph corresponding to the adjacency matrix of the signed graph. Consider a signed graph $G$ and denote its Coates digraph by $D(G)$. For an edge between vertices $i, j$ in $G$, there are two directed edges of equal weights in $D(G)$, one from $i$ to $j$ and other from $j$ to $i$. This forms a directed cycle of length $2$ which we call a directed $2$-cycle. In a linear subdigraph of $D(G)$,  $k$ such directed $2$-cycles appear due to the $k$ matchings in $G$. Thus, there is a one-one correspondence between matchings in $G$ and directed $2$-cycles in a linear subdigraph of $D(G)$. Thus, by $k$-matching in linear subdigraphs we mean, the existence of $k$ vertex-disjoint directed $2$-cycles.  For example, the Coates digraph of balanced $C_4$ in Figure \ref{bal} is shown in Figure \ref{fig2b}. Note that, there are two 2-matchings in balanced $C_n$ in Figure \ref{bal}. These are \{$(1,2),(3,4)$\}, and \{$(2,3),(1,4)$\}. In Figure \ref{fig2b}, corresponding to these two matchings, there are two directed 2-cycles in linear subdigraph $L_3$, $L_4$, respectively, in the Coates digraph of the balanced $C_n$.   Now we recall the definition of the determinant of the adjacency matrix $A$ of $G$ in terms of its linear subdigraphs in $D(G)$.

\begin{figure}
\centering
\begin{subfigure}{.5\textwidth}
  \centering
  \includegraphics[width=0.9\linewidth]{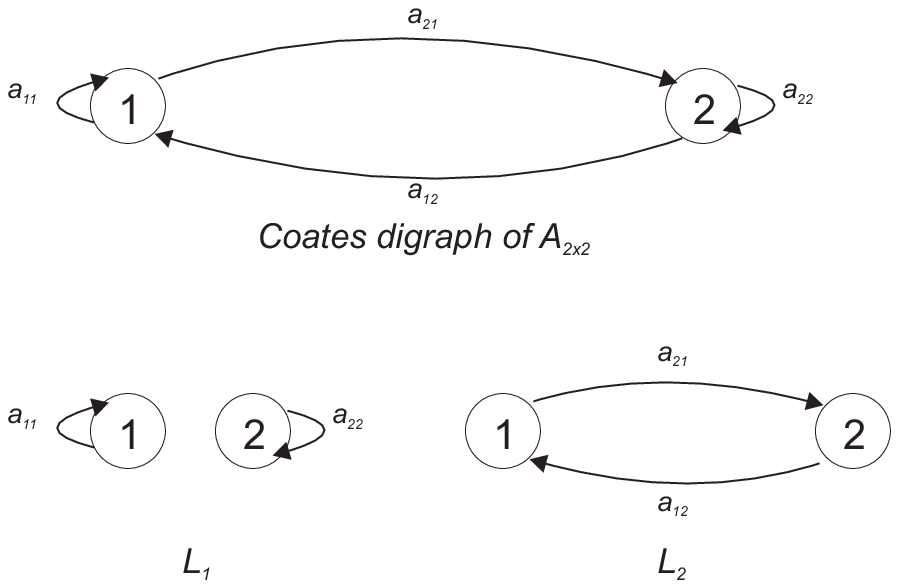}
  \caption{}
  \label{fig2a}
\end{subfigure}%
\begin{subfigure}{.5\textwidth}
  \centering
  \includegraphics[width=0.9\linewidth]{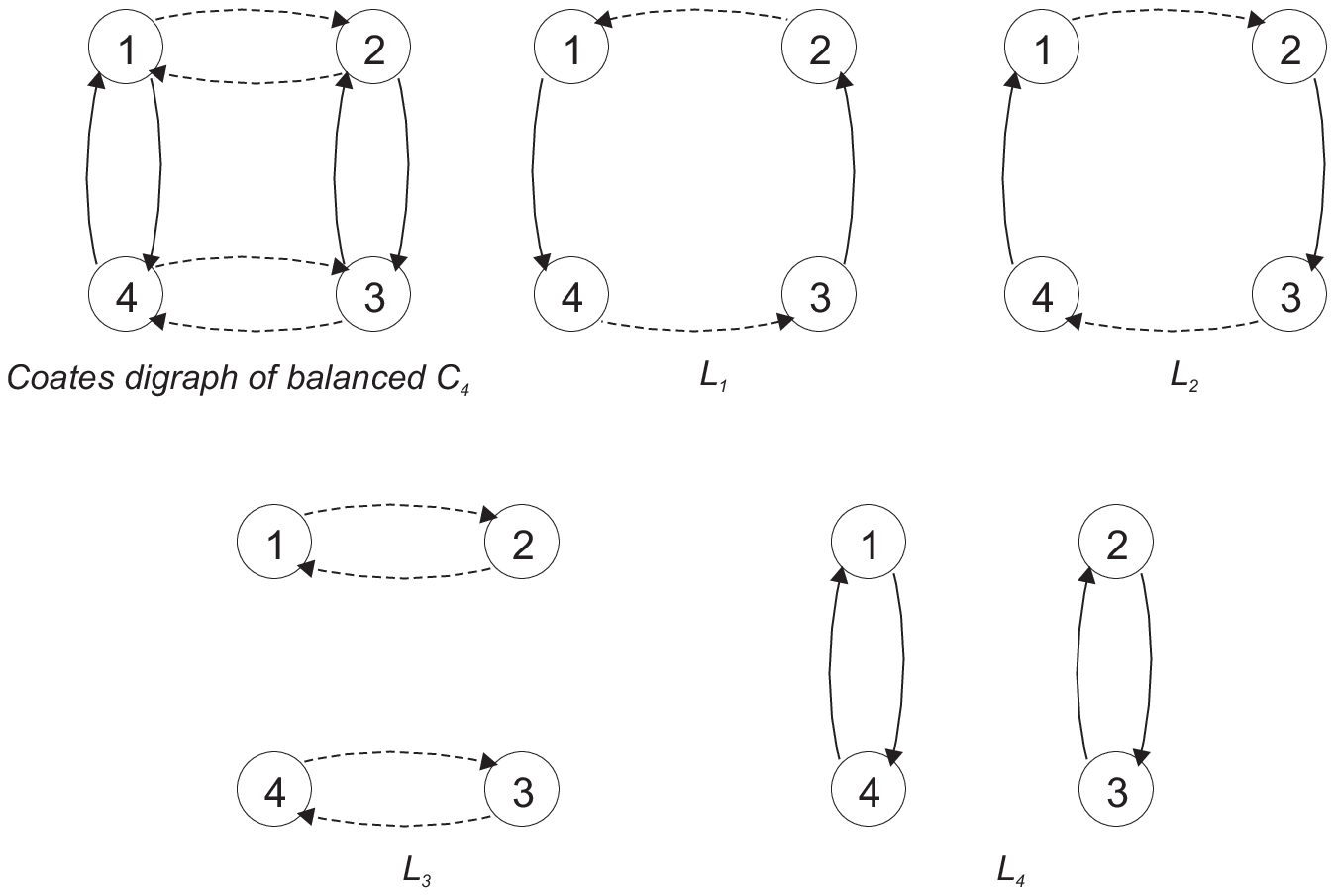}
  \caption{} 
  \label{fig2b}
\end{subfigure}
\caption{The Coates digraph the and linear subdigraphs of (a) $A_{2\times 2}$ where $L_1$ and $L_2$ are the linear subdigraphs (b) Balanced $C_4,$  where $L_1, L_2, L_3$, and $L_4$ are the linear subdigraphs.}
\label{fig2}
\end{figure}

\begin{theorem}\cite{b4} \label{thm1}
Let $A$ be a square matrix of order $n$. Then 
$$\det A = (-1)^n \sum_{L\in\mathcal{L}(A)}(-1)^{c(L)}w(L),$$ 
where, $w(L)$ is the weight of linear subdigraph $L$ of the Coates digraph $D(A)$, $c(L)$ is the number of directed cycles in $L$, and $\mathcal{L}(A)$ denotes the set of all linear subdigraphs of $D(A)$.
\end{theorem}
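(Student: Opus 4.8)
The plan is to derive the identity directly from the Leibniz expansion of the determinant and then reorganize the sum according to the cycle structure of the permutations. First I would write
$$\det A = \sum_{\sigma \in S_n} \operatorname{sgn}(\sigma) \prod_{i=1}^n A_{i,\sigma(i)},$$
and immediately discard every permutation $\sigma$ for which some factor $A_{i,\sigma(i)}$ vanishes, since such terms contribute nothing to the sum.

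Next I would set up the bijection between the surviving permutations and the linear subdigraphs of $D(A)$. Given $\sigma$ with all $A_{i,\sigma(i)} \neq 0$, each factor $A_{i,\sigma(i)}$ is, by the definition of the Coates digraph, the weight of the directed edge from $\sigma(i)$ to $i$. Collecting these $n$ edges yields a spanning subdigraph in which every vertex has in-degree and out-degree exactly $1$, that is, a linear subdigraph $L$; and conversely every linear subdigraph arises in this way from a unique $\sigma$. A cycle $(i_1\, i_2 \cdots i_k)$ of $\sigma$ (where $\sigma(i_1)=i_2,\dots,\sigma(i_k)=i_1$) produces exactly the directed cycle $i_1 \to i_k \to \cdots \to i_2 \to i_1$ of $L$, with fixed points of $\sigma$ giving loops; hence the number of disjoint cycles of $\sigma$ equals $c(L)$. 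Along this correspondence $\prod_{i} A_{i,\sigma(i)} = w(L)$, so the weight and cycle count are both accounted for.

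Finally I would handle the sign. A permutation whose disjoint-cycle decomposition (counting fixed points as $1$-cycles) consists of $c(L)$ cycles with lengths summing to $n$ satisfies $\operatorname{sgn}(\sigma) = (-1)^{\,n - c(L)}$, because a cycle of length $\ell$ is a product of $\ell-1$ transpositions and $\sum (\ell_j - 1) = n - c(L)$. Using $(-1)^{\,n-c(L)} = (-1)^n (-1)^{c(L)}$ and substituting into the Leibniz sum then gives
$$\det A = \sum_{L \in \mathcal{L}(A)} (-1)^{\,n - c(L)}\, w(L) = (-1)^n \sum_{L \in \mathcal{L}(A)} (-1)^{c(L)}\, w(L),$$
which is the claimed formula.

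The bijection and the weight identity are routine; the step requiring the most care is the sign bookkeeping. One must pin down the orientation convention of $D(A)$ — that the edge from $j$ to $i$ carries $A_{i,j}$ — so that each permutation cycle maps to a genuine directed cycle, and then verify that the prefactor $(-1)^n$ in the statement correctly reconciles $\operatorname{sgn}(\sigma) = (-1)^{\,n-c(L)}$ with the exponent $(-1)^{c(L)}$ appearing inside the sum. This is precisely where a sign error would most easily slip in, so I would double-check it on a small case such as the $2\times 2$ matrix illustrated in Figure \ref{fig2a}.
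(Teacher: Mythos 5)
Your proposal is correct. Note that the paper does not prove this theorem at all --- it is quoted from the cited reference \cite{b4} --- so there is no in-paper argument to compare against. Your derivation is the standard one: the Leibniz expansion, the bijection between permutations with nonvanishing diagonal product and linear subdigraphs (with the convention that the edge $j\to i$ carries $A_{i,j}$, so the cycle $(i_1\,i_2\cdots i_k)$ of $\sigma$ traces the directed cycle $i_1\to i_k\to\cdots\to i_2\to i_1$), and the sign identity $\operatorname{sgn}(\sigma)=(-1)^{n-c(L)}=(-1)^n(-1)^{c(L)}$. All three steps are handled correctly, and the sign bookkeeping --- the place you rightly flag as the danger spot --- checks out; the only cosmetic remark is that the bijection is really with those linear subdigraphs of nonzero weight, but subdigraphs of weight zero contribute nothing to either side, so nothing is lost.
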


The paper is organized as the following. In Section \ref{path&cycle}, we calculate the characteristic polynomials, hence the eigenvalues and the determinant of cycle  and path graphs using the concept of linear subdigraphs and matching.  We calculate the characteristic polynomial, determinant and the eigenvalues of $K_n^{m,r}$ in Section \ref{cgnk}. In Section \ref{cgnkdo}, we give the bounds of the eigenvalues of complete graphs having disjoint negative cliques of different orders which cover the whole vertex-set. Finally, in Section  \ref{sbg} we calculate the eigenvalues of regular star block graphs. We again mention that the negative of the eigenvalues of the graphs in these sections gives the eigenvalues of wide classes of weakly balanced signed graphs.

\section{Characteristic Polynomial of $C_n$ and $P_n$} \label{path&cycle}
We denote the weight of the cycle  graph $C_n$ by $\delta$. If $C_n$ is balanced, $\delta=1$, otherwise, $\delta=-1$. The Coates digraph corresponding to the adjacency matrix $\Big(A(C_n)-\lambda I_n\Big)$ is a directed graph or digraph on $n$ vertices with 
\begin{enumerate}
\item a loop of weight $-\lambda$ at each vertex.
\item for each pair of adjacent vertices in cycle $C_n$, there are two opposite directed edges connecting these adjacent vertices in the Coates digraph.
\end{enumerate}   

Next, we require the number of $k$-matchings in $C_n$, which is used to find the linear subdigraphs of  the Coates digraph of $\Big(A(C_n)-\lambda I_n\Big)$. We state the following standard result \cite{match}. 
\begin{proposition} \label{matchincycle}
The number of $k$-matching in cycle graph $C_{n}$ is equal to
\begin{equation}\label{eqn2}
\frac{n}{n-k}\binom{n-k}{k}.
\end{equation}
\end{proposition}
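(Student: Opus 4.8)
The plan is to translate the matching condition into a purely combinatorial selection problem on a circle, and then reduce that circular problem to the easier linear (path) one. First I would observe that a $k$-matching in $C_n$ is exactly a choice of $k$ edges, no two of which share a vertex. Labelling the $n$ edges of $C_n$ cyclically as $e_1, \dots, e_n$, two edges $e_i, e_{i+1}$ (indices read modulo $n$) share a vertex precisely when they are consecutive. Hence $k$-matchings of $C_n$ are in bijection with selections of $k$ objects, no two consecutive, from $n$ objects arranged around a circle. So it suffices to prove that this circular non-consecutive count equals $\frac{n}{n-k}\binom{n-k}{k}$.

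Next I would establish the linear analogue as an auxiliary fact: the number of ways to choose $k$ pairwise non-consecutive objects from $m$ objects in a row is $\binom{m-k+1}{k}$. This follows from the standard gap argument. Writing the chosen positions as $1 \le a_1 < \cdots < a_k \le m$ with $a_{j+1} \ge a_j + 2$, the substitution $b_j = a_j - (j-1)$ gives a bijection with strictly increasing sequences $1 \le b_1 < \cdots < b_k \le m-k+1$, of which there are exactly $\binom{m-k+1}{k}$.

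Then I would handle the circle by conditioning on one distinguished object, say the one corresponding to $e_1$. If $e_1$ is selected, its two circular neighbours are forbidden, and the remaining $k-1$ objects must be chosen non-consecutively from the $n-3$ objects strictly between the two neighbours, which now form a row; by the linear fact this contributes $\binom{n-k-1}{k-1}$. If $e_1$ is not selected, the remaining $n-1$ objects form a row and we choose $k$ of them non-consecutively, contributing $\binom{n-k}{k}$. Since these two cases are disjoint and exhaustive, the total is $\binom{n-k}{k} + \binom{n-k-1}{k-1}$.

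Finally, a short algebraic simplification closes the argument: since $\binom{n-k-1}{k-1} = \frac{k}{n-k}\binom{n-k}{k}$, the sum becomes $\binom{n-k}{k}\left(1 + \frac{k}{n-k}\right) = \frac{n}{n-k}\binom{n-k}{k}$, as claimed. None of the steps is genuinely hard; the one requiring the most care is the circular-to-linear reduction, where one must correctly account for the wrap-around adjacency (that $e_n$ and $e_1$ are neighbours) so that the two cases are disjoint and exhaustive and the cut introduced when removing $e_1$ or its neighbours does not create or destroy phantom adjacencies among the remaining objects.
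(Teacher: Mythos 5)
Your proof is correct. Note that the paper does not actually prove this proposition: it is stated as a known result with a citation, so there is no in-paper argument to compare against. What you have written is the classical derivation of the Lucas-type count of independent sets in a cycle, specialised to the line graph of $C_n$: the reduction of $k$-matchings of $C_n$ to choices of $k$ pairwise non-consecutive objects on a circle of $n$ objects is valid (edges of $C_n$ share a vertex exactly when they are cyclically consecutive), the linear gap lemma giving $\binom{m-k+1}{k}$ is applied with the right parameters in both branches ($m=n-3$ choosing $k-1$, and $m=n-1$ choosing $k$), the two cases are genuinely disjoint and exhaustive, and the final identity
$\binom{n-k}{k}+\binom{n-k-1}{k-1}=\frac{n}{n-k}\binom{n-k}{k}$
checks out (it also returns the correct boundary values, e.g. $2$ perfect matchings when $n$ is even and $k=n/2$). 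So your proposal supplies a complete, self-contained proof of a statement the paper merely imports.
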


For cycle graphs $m(G) = \left \lfloor n/2 \right \rfloor$, thus the number of all possible matching in $G$ is given by
\begin{equation}\label{eqn3}
\sum_{k=0}^{\left \lfloor n/2 \right \rfloor}\frac{n}{n-k}\binom{n-k}{k},
\end{equation}

where $k=0$ corresponds to no matching. Each $k$-matching in $C_n$ corresponds to $k$ vertex-disjoint directed $2$-cycles in its Coates digraph covering $2k$ vertices. These $k$ directed $2$-cycles along with the loops at the remaining $n-2k$ vertices form the linear subgraphs in the Coates digraph of $C_n$.

\begin{theorem}\label{cpc}
The characteristic polynomial $\phi(C_n)$ of cycle  graph  $C_n$, having weight $\delta\in\{-1,1\}$ is given by
\begin{equation*}
\phi(C_n)=\begin{cases}
(-1)^{n}\Bigg(\Big(\sum_{k=1}^{(\frac{n}{2}-1)}\frac{n}{n-k}\binom{n-k}{k}\times (-1)^{n-k}\times (-\lambda)^{n-2k}\Big)+2(-1)^{\frac{n}{2}}-2\delta\Bigg) & \mbox{if $n$ is even},\\
(-1)^{n}\Bigg(\sum_{k=1}^{\left \lfloor n/2 \right \rfloor}\frac{n}{n-k}\binom{n-k}{k}\times (-1)^{n-k}\times (-\lambda)^{n-2k}-2\delta\Bigg) & \mbox{if $n$ is odd.}
\end{cases}
\end{equation*}
\end{theorem}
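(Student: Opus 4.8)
The plan is to compute $\phi(C_n) = \det\bigl(A(C_n) - \lambda I_n\bigr)$ directly from the Coates expansion of Theorem \ref{thm1}, applied to the matrix $A(C_n) - \lambda I_n$. By the two structural facts recorded just before the statement, the Coates digraph $D\bigl(A(C_n) - \lambda I_n\bigr)$ carries a loop of weight $-\lambda$ at every vertex, together with, for each edge of $C_n$, a pair of oppositely directed arcs whose common weight is the sign of that edge. Everything then reduces to enumerating the linear subdigraphs of this digraph and summing $(-1)^{c(L)}w(L)$ over them, as dictated by Theorem \ref{thm1}.

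The first and most important step is to classify the available directed cycles. Since the underlying graph is a single $n$-cycle, each vertex has exactly two neighbours, so a directed cycle that is neither a loop nor a directed $2$-cycle is forced, once it leaves a vertex in one direction, to continue around the entire cycle without turning back. Hence the only possibilities are: (i) the loops of weight $-\lambda$; (ii) the directed $2$-cycles, one per edge of $C_n$, each of weight $(\pm 1)^2 = 1$ because its two arcs share the edge sign, which is therefore squared; and (iii) exactly two Hamiltonian directed cycles (clockwise and counter-clockwise), each of weight equal to the product of all edge signs, namely $\delta$. Consequently every linear subdigraph is of one of two types: either it uses no Hamiltonian cycle, in which case its directed $2$-cycles form a $k$-matching of $C_n$ while the remaining $n-2k$ vertices carry loops; or it is a single Hamiltonian cycle, which already covers all $n$ vertices.

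Next I would compute each contribution. A matching-type subdigraph with $k$ directed $2$-cycles has weight $w(L) = 1^{k}(-\lambda)^{n-2k} = (-\lambda)^{n-2k}$ and $c(L) = k + (n-2k) = n-k$ cycles, so it contributes $(-1)^{n-k}(-\lambda)^{n-2k}$; by Proposition \ref{matchincycle} there are $\tfrac{n}{n-k}\binom{n-k}{k}$ such subdigraphs. A Hamiltonian-type subdigraph has $c(L)=1$ and weight $\delta$, contributing $-\delta$, and since there are two of them the total is $-2\delta$. Summing the matching contributions over $k$ (the $k=0$ term furnishing the leading monomial $(-1)^n\lambda^n$), adding $-2\delta$, and multiplying by the outer factor $(-1)^n$ of Theorem \ref{thm1} yields $\phi(C_n)$. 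I would finish by splitting on the parity of $n$: when $n$ is odd there is no perfect matching, so $k$ runs up to $\lfloor n/2\rfloor$ and the Hamiltonian term stands by itself; when $n$ is even the extremal value $k=n/2$ records the two perfect matchings, each contributing $(-1)^{n/2}$, which I would extract from the sum as the separate term $2(-1)^{n/2}$ and let the remaining sum run up to $n/2 - 1$. This reproduces the two cases of the statement.

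The step I expect to be the crux is the cycle classification: verifying rigorously that no directed cycles other than loops, $2$-cycles, and the two full Hamiltonian cycles occur, and pinning down their weights. In particular, the point that each directed $2$-cycle has weight $+1$ (the edge sign being squared) while the sign $\delta$ of $C_n$ enters the answer \emph{only} through the two Hamiltonian cycles, as the term $-2\delta$, is exactly where the balanced/unbalanced distinction is encoded. Once this classification and the two weight computations are secured, the rest — tracking $c(L) = n-k$, isolating the perfect-matching term in the even case, and simplifying $(-1)^{c(L)}$ against $(-\lambda)^{n-2k}$ — is routine bookkeeping.
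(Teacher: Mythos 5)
Your proposal is correct and follows essentially the same route as the paper: enumerate the linear subdigraphs of the Coates digraph of $A(C_n)-\lambda I_n$, which split into the two Hamiltonian directed cycles of weight $\delta$ and the matching-type subdigraphs counted by Proposition \ref{matchincycle}, then apply Theorem \ref{thm1} and split on the parity of $n$. Your explicit justification that loops, directed $2$-cycles, and the two Hamiltonian cycles are the only directed cycles available (each $2$-cycle having weight $+1$ since the edge sign is squared) is somewhat more careful than the paper's, and your remark that the $k=0$ (all-loops) term supplies the leading monomial $(-1)^n\lambda^n$ is in fact needed for the polynomial to have the right degree, whereas the displayed sums in the statement begin at $k=1$.
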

\begin{proof}
In the Coates digraph of the matrix $\Big(A(C_n)-\lambda I_n\Big)$ there will be the following two type of the linear subdigraphs along with their contribution to $\phi(C_n).$
\begin{enumerate}
\item The two directed $n$-cycles; one clockwise and another  anticlockwise, respectively, each having weight $\delta$. Using Theorem \ref{thm1} their contribution to $\phi(C_n)$ is $$(-1)^n\Big(2(-1)^1\delta\Big)=(-1)^n(-2\delta).$$ 

\item The linear subdigraph having $k$-matching covering $2k$ vertices, and the loops at the remaining $n-2k$ vertices for $k=1,2,\hdots,\lfloor n/2\rfloor $. The weight of each $k$-matching is $1$, and the weight of the $n-2k$ loops is $(-\lambda)^{n-2k}$. The total number of cycles are $k+n-2k=n-k.$ If,
\begin{enumerate}
\item $n$ is even: for $k=\frac{n}{2}$, there will be two linear subdigraphs having $\frac{n}{2}$ directed $2$-cycles. Thus, no loop will be selected in these two linear subdigraphs. Their contribution is $$(-1)^n2(-1)^{\frac{n}{2}}.$$
\item $n$ is odd:  there will be no linear subdigraphs having $\frac{n}{2}$ directed $2$-cycles.
\end{enumerate} Thus, using Proposition \ref{matchincycle}, and combining 1. and 2., the result follows.
\end{enumerate} 
\end{proof}

\begin{corollary}
The determinant of cycle  $C_n$, having weight $\delta\in\{-1,1\}$ is given by
$$\det (C_n)=\begin{cases}
2-2\delta &\mbox{if $n$ is even and even multiple of 2,}\\
 -2-2\delta &\mbox{if $n$ is even and odd multiple of 2,}\\
 2\delta &\mbox{if $n$ is odd.}
\end{cases}$$
\end{corollary}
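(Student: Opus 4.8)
The plan is to obtain the determinant directly from the characteristic polynomial computed in Theorem~\ref{cpc}, using the elementary observation that $\det(C_n) = \det\big(A(C_n) - \lambda I_n\big)\big|_{\lambda = 0} = \phi(C_n)\big|_{\lambda = 0}$. So the entire argument reduces to substituting $\lambda = 0$ into the two-case formula of Theorem~\ref{cpc} and simplifying; no new structural input about the graph is needed.

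First I would identify which terms of the summation survive at $\lambda = 0$. Every summand carries a factor $(-\lambda)^{n-2k}$, and over the summation range ($1 \le k \le \tfrac{n}{2} - 1$ in the even case, $1 \le k \le \lfloor n/2 \rfloor$ in the odd case) the exponent $n - 2k$ is strictly positive. Hence each such term vanishes at $\lambda = 0$, and only the $\lambda$-free contributions remain. In the even case these are $2(-1)^{n/2} - 2\delta$; note the term $2(-1)^{n/2}$ is exactly the contribution of the two linear subdigraphs made up solely of $\tfrac{n}{2}$ directed $2$-cycles (the $k = n/2$ matching with no loops, the unique place where $n - 2k = 0$). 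In the odd case only $-2\delta$ survives.

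Next I would fold in the prefactor $(-1)^n$. For $n$ even this equals $1$, giving $\det(C_n) = 2(-1)^{n/2} - 2\delta$; for $n$ odd it equals $-1$, giving $\det(C_n) = -(-2\delta) = 2\delta$, which is the claimed odd-case value. The only remaining point is to split the even case into its two stated subcases by the parity of $n/2$: if $n$ is a multiple of $4$ (an ``even multiple of $2$'') then $n/2$ is even and $(-1)^{n/2} = 1$, yielding $2 - 2\delta$; if $n \equiv 2 \pmod 4$ (an ``odd multiple of $2$'') then $n/2$ is odd and $(-1)^{n/2} = -1$, yielding $-2 - 2\delta$.

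There is no genuine obstacle here: once Theorem~\ref{cpc} is in hand, the computation is pure bookkeeping. The one place to exercise care is the sign-and-parity accounting, specifically reading ``even multiple of $2$'' as $4 \mid n$ and correctly translating it into the sign of $(-1)^{n/2}$, since a slip there would interchange the first two subcases.
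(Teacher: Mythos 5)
Your proposal is correct and follows exactly the paper's own route: the paper's proof likewise just sets $\lambda=0$ in Theorem~\ref{cpc} and reads off the constant terms, though it does so in a single sentence where you spell out the bookkeeping (vanishing of the $(-\lambda)^{n-2k}$ terms, the $(-1)^n$ prefactor, and the parity split on $n/2$). No differences in substance.
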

\begin{proof}
To calculate the determinant we need to set $\lambda=0$ in the characteristic polynomial. Hence, the result directly follows by Theorem \ref{cpc}.
\end{proof}

\subsection{Eigenvalues of $C_n$}
Let us consider a matrix $Q$ of order $n\geq2$ such that, the entry $Q_{i,i+1}\in \{1,-1\}$, $i=1,2,\hdots,n-1$, the entry $Q_{n,1}\in \{1,-1\}$ and the remaining entries of $Q$ are zero. The Coates digraph $D(Q-\lambda I)$ is a digraph having directed $n$-cycle with a loop of weight $-\lambda$ at each of its vertices. Thus, the Coates digraph $D(Q-\lambda I)$ has only two linear subdigraphs. One having the directed $n$-cycle without loops, and another consisting of the all $n$ loops. The weight of the directed $n$-cycle is either $1$ or $-1$. It follows from Theorem (\ref{thm1}) that the characteristic equation of $Q$ is given by:
\begin{equation}
(-1)^{n}\Big((-1)^{n}(-\lambda)^{n}+(-1)^{1}\delta\Big)=0 \implies \lambda^{n}-\delta=0,
\end{equation} 

which means that the eigenvalues of $Q$ are $1,\omega,\omega^{2},...\omega^{n-1}$, where,
$$\omega=\begin{cases}
e^{\frac{2\pi \iota}{n}} &\mbox{if $\delta =1$,}\\
 e^{\iota\frac{\pi +2\pi k}{n}} &\mbox{if $\delta =-1$.}
\end{cases}$$

For  a cycle $C_n$, the  adjacency matrix $A(C_n)=Q+Q'=Q+Q^{n-1}$ is a polynomial in $Q$ \cite{bapat2014adjacency}. Thus, the eigenvalues of $A(C_n)$ are obtained by evaluating the same polynomial at each of the eigenvalues of $Q$, thus the eigenvalues of $A(C_n)$ are $\omega^{k}+\omega^{n-k},k=1,\hdots,n$.

\begin{theorem}
The eigenvalues of  $C_n$ are
$$\omega=\begin{cases}
2\cos\frac{2\pi k}{n} &\mbox{if $C_n$ is balanced,}\\
 2\cos(\frac{\pi +2\pi k}{n}) &\mbox{if $C_n$ is unbalanced,}
\end{cases}$$ $k=1,2,\hdots,n$.
\end{theorem}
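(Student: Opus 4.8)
The plan is to obtain the eigenvalues directly from the factorization $A(C_n)=Q+Q'$ recorded in the paragraph preceding the statement, rather than from the characteristic polynomial of Theorem \ref{cpc}. First I would collect the two ingredients already established. The spectrum of $Q$ consists precisely of the $n$-th roots of $\delta$: these are $e^{2\pi\iota k/n}$, $k=1,\dots,n$, when $\delta=1$, and $e^{\iota(\pi+2\pi k)/n}$, $k=1,\dots,n$, when $\delta=-1$. Moreover $Q$ has exactly one nonzero entry, equal to $\pm1$, in each row and in each column, so $QQ'=I$, i.e. $Q'=Q^{-1}$. Since $Q^n=\delta I$ gives $Q^{-1}=\delta Q^{n-1}$, the matrix $A(C_n)=Q+Q'=Q+\delta Q^{n-1}$ is a polynomial in the single matrix $Q$.

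Next I would apply the spectral mapping principle: if $\mu_1,\dots,\mu_n$ are the eigenvalues of $Q$ and $p$ is a polynomial, then the eigenvalues of $p(Q)$ are $p(\mu_1),\dots,p(\mu_n)$, counted with multiplicity. Taking $p(x)=x+x^{-1}$ (legitimate since $0$ is not an eigenvalue of $Q$) shows that the eigenvalues of $A(C_n)$ are $\mu_j+\mu_j^{-1}$, which is exactly the form $\omega^k+\omega^{n-k}$ quoted above; here one checks that $\mu_j^{-1}$ is again an $n$-th root of $\delta$, so the spectrum of $Q$ is closed under $\mu\mapsto\mu^{-1}$ and no root is lost.

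The final step is a one-line trigonometric simplification. Every $n$-th root of $\delta$ has modulus $1$, so $\mu_j^{-1}=\overline{\mu_j}$ and hence $\mu_j+\mu_j^{-1}=2\operatorname{Re}(\mu_j)=2\cos(\arg\mu_j)$. Substituting the explicit arguments from the first step yields $2\cos(2\pi k/n)$ when $C_n$ is balanced and $2\cos\big((\pi+2\pi k)/n\big)$ when $C_n$ is unbalanced, for $k=1,\dots,n$, as claimed.

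I do not anticipate a genuine obstacle: once $A(C_n)$ is written as $Q+Q'$ with $Q$ orthogonal (hence normal and diagonalizable), the statement reduces to the spectral mapping theorem together with $e^{\iota\theta}+e^{-\iota\theta}=2\cos\theta$. The only point needing care is the indexing bookkeeping, namely verifying that the list $\{\mu_j+\mu_j^{-1}\}_{k=1}^{n}$ exhausts the spectrum with the correct multiplicities and that conjugation keeps us inside the set of $n$-th roots of $\delta$, so that nothing is double-counted or dropped.
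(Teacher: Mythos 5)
Your proposal is correct and follows essentially the same route as the paper: both write $A(C_n)=Q+Q'$ as a polynomial in the signed cyclic shift $Q$, read off the spectrum of $A(C_n)$ by evaluating that polynomial at the eigenvalues of $Q$ (the $n$-th roots of $\delta$), and finish with $e^{\iota\theta}+e^{-\iota\theta}=2\cos\theta$. If anything you are slightly more careful than the paper, which writes $Q'=Q^{n-1}$ where in fact $Q'=Q^{-1}=\delta Q^{n-1}$; as you observe, this makes no difference because the spectrum of $Q$ is closed under $\mu\mapsto\mu^{-1}$.
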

\begin{proof}
It is clear that the eigenvalues of $A(C_n)$ are $\omega^{k}+\omega^{n-k},k=1,...n.$ To derive the adjacency matrix of balanced $C_n$ from $Q$, the value of $\delta$ has to be 1. Similarly,  to derive the adjacency matrix of unbalanced $C_n$ from $Q$, the value of $\delta$ has to be $-1$. Now, if $\delta=1,$ $$\omega^{k}+\omega^{n-k}=\omega^{k}+\omega^{-k}
=e^{\frac{2\pi \iota k}{n}}+e^{-\frac{2\pi \iota k}{n}}
=2\cos\frac{2\pi k}{n},$$ 
if $\delta=-1,$
$$
e^{\iota\frac{\pi +2\pi k}{n}}+e^{-\iota\frac{\pi +2\pi k}{n}}=2\cos(\frac{\pi +2\pi k}{n}),
$$
for $k=1,2,\hdots,n.$
\end{proof}

%
%\begin{center}  
%\begin{figure}
%
%\begin{subfigure}{.56\textwidth}
%   \includegraphics[width=7.1cm]{EV.jpg}
%  \caption{}
%  \label{ev}
%\end{subfigure}%
%\begin{subfigure}{.1\textwidth}
%   \includegraphics[width=7.1cm]{ED.jpg}
%  \caption{}
%  \label{ed}
%\end{subfigure} 
%
%%\end{center}
%\caption{(a) The eigenvalues of balanced (Blue) and unbalanced (Green) cycle graph $C_{20}$. $k=1,2,\hdots,n$. (b) The difference in the eigenvalues of balanced and unbalanced cycle graph $C_{100}$. }
%\label{f1}
%\end{figure}
%\end{center}

\begin{theorem}
Let $\lambda_{1}\geq\lambda_{2}\geq \hdots\geq\lambda_{n}$ be the eigenvalues of a balanced cycle  graph and $\beta_{1}\geq\beta_{2}\geq \hdots \geq\beta_{n}$ be the eigenvalues of unbalanced cycle  graph of length $n>2.$ Then,
\begin{equation*}
\lvert\lambda_{i}-\beta_{i}\rvert=\lvert\lambda_{n-i+1}-\beta_{n-i+1}\rvert, i =1,2,\hdots, n.
\end{equation*}
\end{theorem}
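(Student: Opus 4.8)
The plan is to work directly from the closed-form spectra established in the previous theorem: the balanced $C_n$ has eigenvalue multiset $\{2\cos\frac{2\pi k}{n}\}$ and the unbalanced $C_n$ has multiset $\{2\cos\frac{\pi+2\pi k}{n}\}$, $k=1,\dots,n$. I would reduce the claimed identity to two elementary symmetry statements about these multisets, selected according to the parity of $n$. The single governing fact is the identity $-\cos\alpha=\cos(\pi+\alpha)$, which converts negation of an eigenvalue into a shift of its angle by $\pi$, i.e. by $n$ steps of size $\pi/n$.

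First I would treat the case $n$ even. Writing a balanced eigenvalue as $2\cos\frac{2\pi k}{n}$, its negative is $2\cos\frac{(n+2k)\pi}{n}$; since $n+2k$ is even this is again a balanced eigenvalue, so the balanced multiset is invariant under $\mu\mapsto-\mu$. The same computation applied to $2\cos\frac{(2k+1)\pi}{n}$ shows that the unbalanced multiset is likewise invariant under negation. Since a finite multiset closed under negation has its $i$-th largest element equal to minus its $i$-th smallest, this yields $\lambda_{n-i+1}=-\lambda_i$ and $\beta_{n-i+1}=-\beta_i$. Hence $\lambda_{n-i+1}-\beta_{n-i+1}=-(\lambda_i-\beta_i)$, and the two absolute values agree.

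Next I would treat $n$ odd. Here the negation $2\cos\frac{(n+2k)\pi}{n}$ of a balanced eigenvalue has an odd numerator $n+2k$, and as $k$ runs through a complete residue system modulo $n$ the numerator $n+2k$ runs through all odd residues modulo $2n$; by periodicity and evenness of the cosine this reproduces exactly the unbalanced multiset. Thus $\{\beta_j\}=\{-\lambda_j\}$ as multisets, so sorting in decreasing order gives $\beta_i=-\lambda_{n-i+1}$. Substituting, $\lambda_i-\beta_i=\lambda_i+\lambda_{n-i+1}$ and, symmetrically, $\lambda_{n-i+1}-\beta_{n-i+1}=\lambda_{n-i+1}+\lambda_i$, so in this case the two differences are in fact literally equal, and a fortiori have equal absolute value.

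The concluding algebra in both cases is routine and needs no comment. The only point demanding genuine care is the counting step for odd $n$ — verifying that negating the balanced angles yields each unbalanced angle exactly once — together with a clean treatment of the sorted indices when the spectrum contains repeated values; this is why I prefer to argue at the level of multisets rather than of individually indexed sorted lists. With the two parity cases in hand, the identity $\lvert\lambda_i-\beta_i\rvert=\lvert\lambda_{n-i+1}-\beta_{n-i+1}\rvert$ follows for every $i$.
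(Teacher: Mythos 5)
Your proposal is correct and rests on the same two key identities as the paper's own proof: for even $n$ each spectrum is closed under negation, giving $\lambda_{n-i+1}=-\lambda_i$ and $\beta_{n-i+1}=-\beta_i$, while for odd $n$ the unbalanced spectrum is the negation of the balanced one, giving $\beta_i=-\lambda_{n-i+1}$. Your multiset formulation is a cleaner way to establish these facts than the paper's explicit bookkeeping of sorted index sequences, and it handles repeated eigenvalues more safely, but it is essentially the same argument.
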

\begin{proof}
\begin{enumerate}
\item If $n$ is even: $\cos$ function lie in range [-1 1]. The eigenvalues of a balanced and an unbalanced $C_n$ are $2\cos(\frac{2\pi k}{n})$, and $2\cos(\frac{\pi +2\pi k}{n})$, respectively, for $k=1,2,\hdots,n$. To get $\lambda_{1}\geq\lambda_{2}\geq \hdots \geq\lambda_{n}$ and $\beta_{1}\geq\beta_{2}\geq \hdots \geq\beta_{n}$ we need to sort the values of $2\cos(\frac{2\pi k}{n})$ and $2\cos(\frac{\pi +2\pi k}{n})$ in a descending order. 
Also,
$2\cos(\frac{2\pi k}{n})=2\cos(\frac{2\pi (n-k)}{n})$, and $2\cos(\frac{\pi +2\pi k}{n})=2\cos(\frac{-\pi -2\pi k}{n})=2\cos(\frac{\pi +2\pi (n-k-1)}{n})$. The sorted order of the eigenvalues of balanced $C_n$ is for the sequence $k=n,1,(n-1),2,(n-2),\hdots,i,(n-i),\hdots, (n/2+1),n/2$. For unbalanced $C_n,$ the sorted order is for the sequence $k=n,(n-1),1,(n-1-1),2,(n-2-1),\hdots,i,(n-i-1),\hdots, (n/2-1),n/2$. Now, consider $\lambda_i$ and $\lambda_{n-i+1}$. As their corresponding $k$ indices are at a difference of $n/2$, we have, $2\cos(\frac{2\pi (k\pm n/2)}{n})=-2\cos(\frac{2\pi k}{n})$. Hence, $\lambda_{n-i+1}=-\lambda_i$. Corresponding $k$ indices of $\beta_i$ and $\beta_{n-i+1}$ are also at a difference of $n/2$. Thus, $2\cos(\frac{\pi +2\pi (k\pm n/2)}{n})=-2\cos(\frac{\pi +2\pi k}{n})$. Hence, $\beta_{n-i+1}=-\beta_i$, and $\lvert\lambda_{i}-\beta_{i}\rvert=\lvert\lambda_{n-i+1}-\beta_{n-i+1}\rvert, i =1,2,\hdots, n.$

\item If $n$ is odd: following the similar steps as in the case for even $n$, in this case to get $\lambda_{1}\geq\lambda_{2}\geq \hdots \geq\lambda_{n}$, we need the sequence $k=n,1,(n-1),2,(n-2),\hdots,i,(n-i),\hdots, (n-1)/2,(n+1)/2$, and to get $\beta_{1}\geq\beta_{2}\geq \hdots \geq\beta_{n}$ we need the sequence $k=n,(n-1),1,(n-1-1),2,(n-2-1),\hdots,i,(n-i-1),\hdots, (n+1)/2,(n-1)/2.$  The difference between the $k$ index for $\beta_{i}$ and the $k$ index for $\alpha_{n-i+1}$ is $\pm n/2 -1/2$. We have, $2\cos(\frac{\pi +2\pi (k\pm n/2 -1/2)}{n})=-2\cos(\frac{2\pi k}{n})$. Hence, $\lambda_i=-\beta_{n-i+1}$. Similarly, $\beta_i=-\lambda_{n-i+1},$ thus $\lvert\lambda_{i}-\beta_{i}\rvert=\lvert\lambda_{n-i+1}-\beta_{n-i+1}\rvert, i =1,2,\hdots, n.$
 
\end{enumerate}
\end{proof}

\subsection{Characteristic Polynomial of $P_n$}\label{signedpath}
The Coates digraph corresponding to the adjacency matrix  $A(P_n)$ of  path graph $P_n$, is a directed graph having $n$ vertices with 
\begin{enumerate}
\item a loop of weight $-\lambda$ at each vertex.
\item for every pair of adjacent vertices in path $P_n$, there are two opposite directed edges, connecting these adjacent vertices in the Coates digraph.
\end{enumerate}   

We state the following standard result \cite{match}.
\begin{proposition} \label{matchinp}
The number of $k$-matching in path graph $P_{n}$ is equal to
\begin{equation}
\binom{n-k}{k}.
\end{equation}
\end{proposition}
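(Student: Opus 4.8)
The plan is to count $k$-matchings in $P_n$ by reformulating the problem as a choice of non-consecutive edges. Label the $n-1$ edges of $P_n$ as $e_1, e_2, \ldots, e_{n-1}$, where $e_i$ joins vertices $i$ and $i+1$. Two edges $e_i$ and $e_j$ share a vertex precisely when $\lvert i-j\rvert = 1$, so a $k$-matching corresponds exactly to a choice of $k$ indices from $\{1, 2, \ldots, n-1\}$ no two of which are consecutive. Thus the statement reduces to the classical fact that the number of ways to select $k$ pairwise non-consecutive elements from a row of $m = n-1$ objects equals $\binom{m-k+1}{k} = \binom{n-k}{k}$.

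First I would establish this non-consecutive counting fact by a gap (stars-and-bars) argument. Writing the chosen indices as $1 \le i_1 < i_2 < \cdots < i_k \le n-1$ with $i_{t+1} - i_t \ge 2$, the substitution $j_t = i_t - (t-1)$ yields a strictly increasing sequence $1 \le j_1 < j_2 < \cdots < j_k \le n-k$, and this shift is a bijection onto ordinary $k$-subsets of $\{1, \ldots, n-k\}$. Counting the latter gives $\binom{n-k}{k}$, which is exactly the claimed value in Proposition \ref{matchinp}.

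An alternative route, which I would use as a cross-check, is induction via a deletion recurrence. Let $m_k(P_n)$ denote the number of $k$-matchings and focus on the terminal edge $e_{n-1} = (n-1,n)$: a $k$-matching either omits $e_{n-1}$, in which case it is a $k$-matching of $P_{n-1}$, or it contains $e_{n-1}$, in which case vertices $n-1$ and $n$ are used up and the remainder is a $(k-1)$-matching of $P_{n-2}$. This gives $m_k(P_n) = m_k(P_{n-1}) + m_{k-1}(P_{n-2})$, and the claimed formula then follows by Pascal's rule, since $\binom{n-1-k}{k} + \binom{n-1-k}{k-1} = \binom{n-k}{k}$, once the small base cases are checked.

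Since the result is elementary, there is no deep obstacle; the only care needed is bookkeeping. In the bijective argument the delicate point is verifying that the shift $i_t \mapsto i_t - (t-1)$ is genuinely a bijection onto all of the $k$-subsets of $\{1,\ldots,n-k\}$, i.e. both injective and surjective, and tracking the off-by-one in passing from $n-1$ edges to the upper index $n-k$. In the inductive route the analogous caution is confirming that the recurrence and the binomial identity remain valid in the boundary ranges of $k$, for instance at $k=0$ and $k = \lfloor n/2 \rfloor$, where the binomial coefficients may vanish.
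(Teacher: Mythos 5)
Your proof is correct. Note, however, that the paper does not actually prove Proposition \ref{matchinp}: it is quoted as a standard result with a citation to the literature, so there is no in-paper argument to compare against. Your bijection to $k$-subsets of the $n-1$ edge indices with no two consecutive, followed by the shift $i_t \mapsto i_t-(t-1)$ onto ordinary $k$-subsets of $\{1,\ldots,n-k\}$, is the classical derivation and is carried out correctly (the inverse map $j_t\mapsto j_t+(t-1)$ settles the bijectivity concern you raise, and the upper bound $j_k\le (n-1)-(k-1)=n-k$ handles the off-by-one). The cross-check via the deletion recurrence $m_k(P_n)=m_k(P_{n-1})+m_{k-1}(P_{n-2})$ and Pascal's rule is likewise sound, including at the boundary cases $k=0$ and $k=\lfloor n/2\rfloor$. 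Either argument alone would suffice as a self-contained proof of the statement the paper leaves to a reference.
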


Thus, for path graphs $m(G) = \left \lfloor n/2 \right \rfloor$, the number of all possible matchings in $G$ is given by:
\begin{equation}
\sum_{k=0}^{\left \lfloor n/2 \right \rfloor}\binom{n-k}{k}.
\end{equation}

\begin{theorem}\label{cpp}
The characteristic polynomial $\phi(P_n)$ of  $P_n$ is given by
\begin{equation*}
\phi(P_n)=\begin{cases}
(-1)^{n}\Bigg(\Big(\sum_{k=1}^{(\frac{n}{2}-1)}\binom{n-k}{k}\times (-1)^{n-k}\times (-\lambda)^{n-2k}\Big)+(-1)^{\frac{n}{2}}\Bigg) &\mbox{if $n$ is even,}\\
 (-1)^{n}\Bigg(\sum_{k=1}^{\left \lfloor n/2 \right \rfloor}\binom{n-k}{k}\times (-1)^{n-k}\times (-\lambda)^{n-2k}\Bigg) &\mbox{if $n$ is odd.}
\end{cases}
\end{equation*}
\end{theorem}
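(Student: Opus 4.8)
The plan is to run the same Coates-digraph computation used for Theorem \ref{cpc}, the only essential change being that $P_n$, as a tree, carries no long cycles. I would begin by recording the structure of the Coates digraph of $\big(A(P_n)-\lambda I_n\big)$: a loop of weight $-\lambda$ at each of the $n$ vertices, and, for each edge $(i,i+1)$ of the path, a pair of oppositely directed arcs forming a directed $2$-cycle of weight $A_{i,i+1}A_{i+1,i}=1$ (the square of the edge entry, so the sign is irrelevant).

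The crucial observation, which is where the path departs from the cycle, is that the only directed cycles available in this digraph are the loops and the directed $2$-cycles coming from edges. Any directed cycle on $\geq 3$ distinct vertices would correspond to a genuine cycle in $P_n$ itself, and a path has none, so the arc set simply cannot close up. Hence every linear subdigraph $L$ is a disjoint union of $k$ directed $2$-cycles (equivalently, a $k$-matching of $P_n$) together with loops on the remaining $n-2k$ vertices, and there is no counterpart of the two long $n$-cycles that contributed the $-2\delta$ term in Theorem \ref{cpc}.

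It then remains to total the contributions through Theorem \ref{thm1}. A subdigraph arising from a fixed $k$-matching has weight $(-\lambda)^{n-2k}$ and $c(L)=k+(n-2k)=n-k$ cycles, hence contributes $(-1)^n(-1)^{n-k}(-\lambda)^{n-2k}$; by Proposition \ref{matchinp} there are $\binom{n-k}{k}$ such matchings, giving a level-$k$ total of $(-1)^n\binom{n-k}{k}(-1)^{n-k}(-\lambda)^{n-2k}$. Summing over $k=0,1,\dots,\lfloor n/2\rfloor$ yields the characteristic polynomial, with the extreme levels treated separately: the $k=0$ (all-loops) subdigraph supplies the leading term $(-\lambda)^n$, while for even $n$ the level $k=n/2$ consists of the single perfect matching of the path (so that $\binom{n/2}{n/2}=1$), leaving no loop and contributing the isolated term $(-1)^n(-1)^{n/2}$; for odd $n$ there is no perfect matching and the sum simply runs to $\lfloor n/2\rfloor$. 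Collecting these produces the two displayed cases.

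I do not anticipate any real difficulty here: the derivation is structurally identical to that of Theorem \ref{cpc}, and the only care needed is in the sign bookkeeping — keeping track of $c(L)=n-k$ and the global factor $(-1)^n$ of Theorem \ref{thm1} — and in verifying that the tree structure genuinely eliminates every cycle of length exceeding two, which is precisely what removes the $\pm 2\delta$ contribution present in the cycle case.
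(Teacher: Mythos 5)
Your proposal is correct and follows essentially the same route as the paper: enumerate the linear subdigraphs of the Coates digraph of $A(P_n)-\lambda I_n$ as $k$-matchings plus loops, apply Theorem \ref{thm1} with $c(L)=n-k$, and count via Proposition \ref{matchinp}, with the absence of long directed cycles (which you justify explicitly, as the paper does only implicitly) accounting for the missing $-2\delta$ term. Your explicit inclusion of the $k=0$ all-loops subdigraph contributing $(-\lambda)^n$ is in fact slightly more careful than the paper's proof, whose stated sum begins at $k=1$ and thus suppresses the leading term.
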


\begin{proof}
In the Coates digraph of the matrix $\Big(A(P_n)-\lambda I\Big)$ there will be the following type of linear subdigraph along with its contribution to $\phi(P_n)$. The subdigraph having $k$-matching covering $2k$ vertices and the loops at the remaining $n-2k$ vertices for $k=1,2,\hdots,\lfloor n/2\rfloor $. The weight of $k$-matching is $1$, and the weight of $n-2k$ loops is $(-\lambda)^{n-2k}$. The total number of cycles are $k+n-2k=n-k.$  If,
\begin{enumerate}
\item $n$ is even: for $k=\frac{n}{2}$, there will be one linear subdigraphs having $\frac{n}{2}$ directed $2$-cycles. Thus, no loop will be selected in this linear subdigraph. Its contribution is $$(-1)^n(-1)^{\frac{n}{2}}.$$
\item $n$ is odd:  There will be no linear subdigraphs having $\frac{n}{2}$ directed $2$-cycles.
\end{enumerate} Thus, using Proposition \ref*{matchinp}, and combining 1. and 2.,  the result follows.
\end{proof}
As the characteristic polynomial of all path graphs $P_n$ for a given $n$ is same, their eigenvalues are same. These can be found in \cite{bapat2010graphs}.

\begin{corollary}
The determinant of path $P_n$ is given by 
$$\det (P_n)=\begin{cases}
1 &\mbox{if $n$ is even and even multiple of 2,}\\
 -1 &\mbox{if $n$ is even and odd multiple of 2,}\\
 0 &\mbox{if $n$ is odd.}
\end{cases}$$
\end{corollary}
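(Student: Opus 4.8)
The plan is to obtain the determinant by specializing the characteristic polynomial of Theorem \ref{cpp} at $\lambda = 0$, exactly as in the preceding corollary for $C_n$. Since $\phi(P_n) = \det\!\big(A(P_n) - \lambda I\big)$, setting $\lambda = 0$ gives $\det(P_n) = \phi(P_n)\big|_{\lambda = 0}$, so the whole argument reduces to reading off the constant term of the polynomial in each parity case.

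First I would dispose of the case $n$ odd. Every summand in the formula of Theorem \ref{cpp} carries the factor $(-\lambda)^{n-2k}$, and because $n$ is odd the exponent $n - 2k$ is odd, hence strictly positive, for every $k$ in the range $1 \le k \le \lfloor n/2\rfloor$. Consequently each summand vanishes at $\lambda = 0$, and since the odd-$n$ formula contains no additional constant term, we obtain $\det(P_n) = 0$ immediately.

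Next I would treat $n$ even. Here the sum runs only up to $k = \tfrac{n}{2} - 1$, so once more $n - 2k \ge 2 > 0$ throughout and every summand vanishes at $\lambda = 0$. What survives is the extra constant contribution $(-1)^{n/2}$, multiplied by the outer factor $(-1)^{n} = 1$; thus $\det(P_n) = (-1)^{n/2}$. It then remains only to unwind the terminology: writing $n = 2t$, an ``even multiple of $2$'' means $t$ is even, i.e. $n \equiv 0 \pmod 4$, so $(-1)^{n/2} = (-1)^{t} = 1$, whereas an ``odd multiple of $2$'' means $t$ is odd, i.e. $n \equiv 2 \pmod 4$, so $(-1)^{n/2} = -1$. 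This matches the three cases of the statement.

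The computation itself is one substitution; the only point requiring care---and the step I would flag as the main (if minor) obstacle---is the bookkeeping. One must confirm that the upper summation limit $\tfrac{n}{2} - 1$ in the even-$n$ case genuinely excludes the single term with $n - 2k = 0$, so that no nonconstant summand accidentally contributes at $\lambda = 0$, and then correctly pair the two parities of $\tfrac{n}{2}$ with the phrases ``even multiple of $2$'' and ``odd multiple of $2$,'' precisely as was done for $\det(C_n)$ above.
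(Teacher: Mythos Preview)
Your proposal is correct and follows exactly the paper's own approach: the paper's proof consists of the single sentence ``Proof directly follows using Theorem \ref{cpp} on setting $\lambda=0$,'' and you have carried out precisely that substitution, with the additional care of verifying that every summand vanishes and of interpreting the sign $(-1)^{n/2}$ in terms of the residue of $n$ modulo $4$.
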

\begin{proof}
Proof directly follows using Theorem \ref{cpp} on setting $\lambda=0.$ \end{proof}

\section{Characteristic polynomial of $K^{m,r}_n$} \label{cgnk}
In this section we derive the characteristic polynomial of $K^{m,r}_{n}$. Here, the determinant and the eigenvalues are readily follows from the characteristic polynomial, hence they are stated as corollaries without proofs. We first derive the result for the case when, $n=mr$, that is, when all $m$ negative cliques each of order $r$ cover all the $n$ vertices of complete graph. 
\begin{theorem}\label{cpm2m}
The characteristic polynomial of $A(K^{m,r}_{mr})$ is given by 
$$\phi(K^{m,r}_{mr})=(1-\lambda)^{m(r-1)}(1-2r-\lambda)^{m-1}\Big(1+r(m-2)-\lambda\Big).$$
\end{theorem}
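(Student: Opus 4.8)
The plan is to sidestep the Coates-digraph/matching machinery used for $C_n$ and $P_n$ in Section \ref{path&cycle} and instead read the spectrum directly off the highly symmetric block structure of the adjacency matrix. Since $A(K^{m,r}_{mr})$ is a real symmetric matrix, a complete set of eigenvectors with the corresponding eigenvalues and multiplicities determines $\phi(K^{m,r}_{mr})$ entirely, so it suffices to exhibit $mr$ independent eigenvectors. First I would record the matrix explicitly: after relabelling the vertices so that the $m$ negative cliques occupy consecutive blocks of size $r$ (which leaves the characteristic polynomial unchanged), every diagonal block equals $I_r-J_r$ (zero diagonal, $-1$ off-diagonal) and every off-diagonal block equals $J_r$ (all entries $+1$), where $J_r$ is the $r\times r$ all-ones matrix. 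In Kronecker notation this reads
\[
A(K^{m,r}_{mr}) = I_m\otimes(I_r-J_r) + (J_m-I_m)\otimes J_r = I_{mr} - 2\,(I_m\otimes J_r) + J_m\otimes J_r .
\]

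Next I would decompose $\mathbb{R}^{mr}$ according to the eigenstructure of $J_r$ in the second tensor factor, testing vectors of the form $u\otimes v$ with $u\in\mathbb{R}^m$ and $v\in\mathbb{R}^r$. If $v$ is orthogonal to $\mathbf{1}_r$, then $J_r v = 0$, so both the $I_m\otimes J_r$ and $J_m\otimes J_r$ terms annihilate $u\otimes v$, leaving $A(u\otimes v)=u\otimes v$. As $u$ ranges over all of $\mathbb{R}^m$ and $v$ over the $(r-1)$-dimensional space $\mathbf{1}_r^{\perp}$, this yields the eigenvalue $1$ with multiplicity $m(r-1)$, which is precisely the factor $(1-\lambda)^{m(r-1)}$. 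If instead $v=\mathbf{1}_r$, then $J_r\mathbf{1}_r = r\mathbf{1}_r$, and a short computation gives $A(u\otimes\mathbf{1}_r)=\bigl((1-2r)u + rJ_m u\bigr)\otimes\mathbf{1}_r$, so on this $m$-dimensional subspace $A$ acts exactly as the $m\times m$ matrix $(1-2r)I_m + rJ_m$.

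Finally I would diagonalise $(1-2r)I_m + rJ_m$ using the eigenvalues $m$ (eigenvector $\mathbf{1}_m$) and $0$ (multiplicity $m-1$) of $J_m$: this produces the eigenvalue $1-2r+rm = 1+r(m-2)$ once and the eigenvalue $1-2r$ with multiplicity $m-1$, matching the factors $\bigl(1+r(m-2)-\lambda\bigr)$ and $(1-2r-\lambda)^{m-1}$. Since $m(r-1)+(m-1)+1 = mr$, the three families exhaust all $mr$ eigenvalues, and forming the product of the associated linear factors gives the claimed characteristic polynomial. I expect the only delicate point to be the bookkeeping that confirms the vectors $u\otimes v$ span all of $\mathbb{R}^{mr}$ and that the multiplicities sum to $mr$; the individual eigenvalue computations are routine once the Kronecker decomposition is in place.
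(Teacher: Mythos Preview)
Your argument is correct. The Kronecker identity $A=I_{mr}-2(I_m\otimes J_r)+J_m\otimes J_r$ is right, the tensor-product eigenvectors you exhibit are genuinely independent (since $\{u\otimes v\}$ with $u$ ranging over a basis of $\mathbb{R}^m$ and $v$ over a basis of $\mathbb{R}^r$ is a basis of $\mathbb{R}^{mr}$), and the multiplicity count $m(r-1)+(m-1)+1=mr$ closes the argument.

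The paper takes a different route. It writes $A-\lambda I$ in block form with $Y=-A(K_r)-\lambda I_r$ on the diagonal and $X=J_r$ off the diagonal, then performs block row and column operations (subtract the last block row from the others, add all block columns into the last) to reach a block lower-triangular matrix whose determinant factors as $\det(Y-X)^{m-1}\det\bigl(Y+(m-1)X\bigr)$; each factor is then evaluated from the known spectrum of $A(K_r)$. Your approach is more structural and delivers the eigenvectors explicitly, and it would extend at once to any matrix that is a polynomial in $I_m\otimes J_r$ and $J_m\otimes J_r$. The paper's determinant reduction, on the other hand, feeds directly into what comes next: the block form $Y-X$, $Y+(m-1)X$ is exactly what is inverted via Sherman--Morrison in Lemma~\ref{invm2m} to handle the general $K^{m,r}_n$ with $n>mr$, so the elimination approach is doing double duty there.
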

\begin{proof} With suitable relabelling of the vertices in $K^{m,r}_{mr},$ we have

$$A\Big(K^{m,r}_{mr}\Big)= \begin{bmatrix}
-A(K_r) & J & J & \cdots & J \\ J & -A(K_r) & J &\cdots & J \\ \vdots & \vdots & \vdots & \ddots & 
\vdots \\ J & J & J & \cdots & -A(K_r)
\end{bmatrix}_{{mr}\times {mr}}, $$

where, $A(K_r)$ denotes the adjacency matrix of a positive clique $K_r$, $J$ is all-one matrix of order $r$. Then,
$$A\Big(K^{m,r}_{mr}\Big)-\lambda I_{mr}= \begin{bmatrix}
Y & X & X & \cdots & X \\ X & Y & X &\cdots & X \\ \vdots & \vdots & \vdots & \ddots & 
\vdots \\ X & X & X & \cdots & Y
\end{bmatrix}_{mr} ,$$
where, $$Y=-A(K_r)-\lambda I_{r},\ \ X=J_{r},$$  and $J_{r}$ is all-one matrix of order $r.$

In the above matrix $A\Big(K^{m,r}_{mr}\Big)-\lambda I_{mr}$, subtract the last row from all the other rows. This produces
$$
\begin{bmatrix}Y-X & O & O & \dots & O & X-Y\\ O & Y-X & O & \dots & O & X-Y\\ O & O & Y-X & \dots & O & X-Y\\ \vdots & \vdots & \vdots & \ddots & \vdots & \vdots\\ O & O & O & \dots & Y-X & X-Y\\ X & X & X & \dots & X & Y\end{bmatrix},
$$
Now, add first $r-1$ columns to the last column. This produce the following lower triangular matrix, 
$$
\begin{bmatrix}Y-X & O & O & \dots & O & O\\ O & Y-X & O & \dots & O & O\\ O & O & Y-X & \dots & O & O\\ \vdots & \vdots & \vdots & \ddots & \vdots & \vdots\\ O & O & O & \dots & Y-X & O\\ X & X & X & \dots & X & \Big(Y+(m-1)X\Big)\end{bmatrix}.
$$

Hence, \begin{equation} \label{deteve}
\det\Bigg(A\Big(K^{m,r}_{mr}\Big)-\lambda I_{mr}\Bigg)=\det(Y-X)^{m-1}\det\Big(Y+(m-1)X\Big).
\end{equation}

Also, $$Y-X=-2A(K_r)-(\lambda+1)I_r.$$ 
The eigenvalues of $A(K_r)$ are given by $-1, (r-1)$ with the multiplicity $(r-1)$ , $1$, respectively \cite{bapat2010graphs}. Hence, the eigenvalues of the matrix, $Y-X$, are $(1-\lambda),  (-2r+1-\lambda)$ with multiplicities $(r-1)$ , $1$, respectively. As the determinant of a matrix is product of its eigenvalues including multiplicities, thus

$$\det(Y-X)=(1-\lambda)^{r-1}(1-2r-\lambda).$$
Next,
$$Y+(m-1)X=(m-2)A(K_r)+(m-1-\lambda)I_r.$$
The eigenvalues of $Y+(m-1)X$ are $(1-\lambda),\ \Big(1+r(m-2)-\lambda\Big)$ with multiplicity $(r-1), 1$, respectively.
Hence,
$$\det\Big(Y+(m-1)X\Big)=(1-\lambda)^{r-1}\Big(1+r(m-2)-\lambda\Big).$$
  
Thus, using Equation (\ref{deteve}) $$\phi(K^{m,r}_{mr})=\Big((1-\lambda )^{r-1}(1-2r-\lambda)\Big)^{m-1}(1-\lambda)^{r-1}\Big(1+r(m-2)-\lambda\Big)$$
$$=(1-\lambda)^{m(r-1)}(1-2r-\lambda)^{m-1}\Big(1+r(m-2)-\lambda\Big).$$
\end{proof}

\begin{corollary}
The determinant of $K^{m,r}_{mr}$ is given by $$(1-2r)^{(m-1)}\Big(1+r(m-2)\Big).$$
\end{corollary}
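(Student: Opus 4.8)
The plan is to obtain the determinant directly from the characteristic polynomial already established in Theorem \ref{cpm2m}. Recall that the characteristic polynomial is defined as $\phi(A) = \det(A - \lambda I)$, so evaluating it at $\lambda = 0$ simply returns $\det(A)$. Thus the entire task reduces to a single substitution: I would take the factored form
$$
\phi(K^{m,r}_{mr}) = (1-\lambda)^{m(r-1)}(1-2r-\lambda)^{m-1}\bigl(1+r(m-2)-\lambda\bigr)
$$
and set $\lambda = 0$. The first factor becomes $(1)^{m(r-1)} = 1$, the second becomes $(1-2r)^{m-1}$, and the third becomes $\bigl(1+r(m-2)\bigr)$, yielding exactly the claimed expression $(1-2r)^{m-1}\bigl(1+r(m-2)\bigr)$.

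Because the hard structural work—the row-reduction to lower triangular form and the eigenvalue computation of the blocks $Y-X$ and $Y+(m-1)X$—has already been carried out in the proof of Theorem \ref{cpm2m}, there is essentially no remaining obstacle here. The only point worth a moment's care is the sign convention: since the paper defines $\phi(A)$ via $\det(A-\lambda I)$ rather than $\det(\lambda I - A)$, no extra factor of $(-1)^n$ appears when passing from the polynomial to the determinant, and the substitution $\lambda = 0$ is clean. I would present the corollary as a one-line consequence, noting only that the determinant of any matrix equals its characteristic polynomial evaluated at zero.

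If a slightly more self-contained argument were desired, I could instead observe that $\det(K^{m,r}_{mr})$ is the product of the eigenvalues with multiplicities, which by Theorem \ref{cpm2m} are $1$ with multiplicity $m(r-1)$, the value $(1-2r)$ with multiplicity $m-1$, and $\bigl(1+r(m-2)\bigr)$ once; their product again gives $(1-2r)^{m-1}\bigl(1+r(m-2)\bigr)$. Either route is immediate, so I expect the proof to occupy no more than one or two sentences, exactly in the spirit of the corollary being stated without an independent derivation.
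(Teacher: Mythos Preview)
Your proposal is correct and matches the paper's approach exactly: the paper explicitly notes that the determinant and eigenvalues ``readily follow from the characteristic polynomial'' and states this corollary without proof, which is precisely your substitution $\lambda=0$ in the factored form from Theorem~\ref{cpm2m}. Your remark about the sign convention is apt and your one-line derivation is all that is expected here.
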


\begin{corollary} 
The eigenvalues of $K^{m,r}_{mr}$ are $1,(1-2r)$, and $\Big(1+r(m-2)\Big)$ with multiplicity $m(r-1), m-1$, and $1,$ respectively.
\end{corollary}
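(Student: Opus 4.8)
The plan is to obtain the eigenvalues directly from the characteristic polynomial supplied by Theorem~\ref{cpm2m}, since by definition the eigenvalues of $A(K^{m,r}_{mr})$ are exactly the roots of $\phi(K^{m,r}_{mr})=\det\!\big(A(K^{m,r}_{mr})-\lambda I_{mr}\big)$, and the algebraic multiplicity of each eigenvalue equals the exponent of the corresponding linear factor in a complete factorization of the polynomial. No further matrix manipulation is needed, because Theorem~\ref{cpm2m} already delivers the polynomial in fully factored form.

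First I would recall the factored form
$$\phi(K^{m,r}_{mr})=(1-\lambda)^{m(r-1)}(1-2r-\lambda)^{m-1}\big(1+r(m-2)-\lambda\big),$$
which is a product of linear factors in $\lambda$. Setting each factor equal to zero then reads off the three candidate eigenvalues: the factor $(1-\lambda)$ vanishes at $\lambda=1$, the factor $(1-2r-\lambda)$ at $\lambda=1-2r$, and the factor $\big(1+r(m-2)-\lambda\big)$ at $\lambda=1+r(m-2)$. The exponents $m(r-1)$, $m-1$, and $1$ carried by these factors are precisely the multiplicities asserted in the statement.

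As a consistency check I would confirm that the multiplicities exhaust the order of the matrix, namely $m(r-1)+(m-1)+1=mr=n$, so the listed values account for the entire spectrum. The only point requiring any care — and the closest thing to an obstacle in an otherwise immediate argument — is the tacit assumption that the three values $1$, $1-2r$, and $1+r(m-2)$ are pairwise distinct, so that the quoted exponents genuinely equal the algebraic multiplicities. Since $r\ge 1$ we have $1-2r\le -1<1$, so the first two never coincide, and $1-2r=1+r(m-2)$ would force $m=0$, which is impossible; however $1+r(m-2)=1$ exactly when $m=2$, in which case the eigenvalue $1+r(m-2)$ merges into the eigenvalue $1$ and the multiplicity of $1$ becomes $m(r-1)+1$. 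Outside this single degenerate case the three eigenvalues are distinct and the statement holds verbatim, and I would flag this boundary behaviour for completeness.
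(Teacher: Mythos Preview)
Your proposal is correct and follows exactly the route the paper intends: the paper explicitly states that the eigenvalues ``readily follow from the characteristic polynomial'' and gives no separate proof, so reading off the roots and exponents from the factorization in Theorem~\ref{cpm2m} is precisely what is expected. Your additional observation about the coalescence of $1$ and $1+r(m-2)$ when $m=2$ is a nice refinement that the paper omits.
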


Next we give the inverse of the matrix $A(K^{m,r}_{mr})-\lambda I_{mr}$. It is used to get the characteristic polynomial for general case $A(K^{m,r}_{n})$.
\begin{lemma}\label{invm2m}
The inverse of $A(K^{m,r}_{mr})-\lambda I_{mr}$ is given by 
$$\frac{1}{\lambda+2r-1}\Bigg(\Big(\frac{1}{\lambda-1}\Big)I_m\otimes \Big(2 A(K_r)-(\lambda+2r-3)I_r\Big)-\frac{1}{\lambda+r(2-m)-1}J\Bigg),$$
where, $\lambda\ne 1,(1-2r)$, and $ \Big(1+r(m-2)\Big)$, $J$ is all-one matrix of order $mr$, and $\otimes$ denotes the tensor product of matrices. 
\end{lemma}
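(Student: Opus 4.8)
The plan is to exploit the block structure already exposed in the proof of Theorem~\ref{cpm2m}. Writing $M := A(K^{m,r}_{mr}) - \lambda I_{mr}$, $Y := -A(K_r) - \lambda I_r$ and $X := J_r$, we have $M = I_m \otimes (Y-X) + J_m \otimes X$, where $J_m$ is the all-ones matrix of order $m$. The key observation is that $J_m$ has only two eigenspaces, so $M$ can be simultaneously block-diagonalized. Concretely, let $P = \tfrac{1}{m}J_m$ and $Q = I_m - P$; these are complementary idempotents with $P + Q = I_m$, $PQ = QP = O$, and $J_m = mP$. Substituting $I_m = P+Q$ and $J_m = mP$ gives $M = P \otimes \big(Y+(m-1)X\big) + Q \otimes (Y-X)$. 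Because $P$ and $Q$ are orthogonal projectors, the inverse splits cleanly as $M^{-1} = P \otimes \big(Y+(m-1)X\big)^{-1} + Q \otimes (Y-X)^{-1}$, valid precisely when both $r\times r$ blocks are nonsingular.

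Second, I would invert the two blocks. From the computation inside Theorem~\ref{cpm2m}, $Y - X = (1-\lambda)I_r - 2J_r$ and $Y + (m-1)X = (1-\lambda)I_r + (m-2)J_r$, so each block has the form $aI_r + bJ_r$. Such a matrix inverts by the rank-one (Sherman--Morrison) formula $(aI_r + bJ_r)^{-1} = \tfrac{1}{a}I_r - \tfrac{b}{a(a+rb)}J_r$, which is legitimate exactly when $a \ne 0$ and $a + rb \ne 0$. For the two blocks these conditions read $\lambda \ne 1$ together with $\lambda \ne 1-2r$ and $\lambda \ne 1+r(m-2)$ --- precisely the three excluded eigenvalues recorded in the eigenvalue corollary to Theorem~\ref{cpm2m}, which is a reassuring consistency check on the hypotheses of the statement.

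Third comes the assembly. I would rewrite $M^{-1} = I_m \otimes (Y-X)^{-1} + \tfrac1m J_m \otimes \big[(Y+(m-1)X)^{-1} - (Y-X)^{-1}\big]$, using $Q = I_m - \tfrac1m J_m$ and $P = \tfrac1m J_m$. The crucial simplification is that the $I_r$-coefficients of the two block inverses are identical, so they cancel in the bracketed difference, leaving only a scalar multiple of $J_r$; combining the two fractional $J_r$-coefficients over a common denominator collapses the numerator to $m(1-\lambda)$, which cancels both the prefactor $\tfrac1m$ and one linear factor, yielding the term $-\tfrac{1}{(\lambda+2r-1)(\lambda + r(2-m)-1)}\,J_m\otimes J_r$. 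Since $J_m \otimes J_r = J$, this is exactly the $J$-term in the claimed formula, while the surviving $I_m \otimes (Y-X)^{-1}$ rewrites, using $A(K_r) = J_r - I_r$, as the stated $\tfrac{1}{(\lambda+2r-1)(\lambda-1)}\,I_m \otimes \big(2A(K_r)-(\lambda+2r-3)I_r\big)$.

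Finally, the main obstacle is bookkeeping rather than conceptual: the fraction arithmetic in the third step must be done carefully so that the numerator of the combined $J_r$-coefficient genuinely collapses to $m(1-\lambda)$, and every sign and shift (for instance $\lambda+2r-1$ versus $1-\lambda-2r$, and $\lambda+r(2-m)-1$ versus $1-\lambda+r(m-2)$) must be tracked consistently when passing between the $\{I_r,J_r\}$ basis used for the inversions and the $\{A(K_r),I_r\}$ basis appearing in the statement. A purely mechanical alternative, which avoids the projector idea altogether, is to multiply $M$ by the claimed inverse and verify that the product is $I_{mr}$, using the tensor identity $(A\otimes B)(C\otimes D) = (AC)\otimes(BD)$ together with $J_r^2 = rJ_r$ and $A(K_r)J_r = (r-1)J_r$; this is longer but requires no invertibility conditions to be anticipated. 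I would present the projector derivation as the main argument, since it also explains where the formula comes from.
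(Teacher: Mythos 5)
Your proposal is correct and reaches the stated inverse, but it takes a different route from the paper. The paper also starts from $M=I_m\otimes(Y-X)+J_m\otimes X$, but it exploits the fact that $J_m\otimes J_r=\mathbf{1}_{mr}\mathbf{1}_{mr}^{T}$ is rank one and applies the Sherman--Morrison formula \emph{once, globally}, to the full $mr\times mr$ matrix: it computes $(Y-X)^{-1}$ by the symmetric ansatz $\alpha I_r+\beta(J_r-I_r)$ (solving two linear equations for $\alpha,\beta$), then evaluates $A_1^{-1}\mathbf{1}\mathbf{1}^{T}A_1^{-1}$ and $1+\mathbf{1}^{T}A_1^{-1}\mathbf{1}$ to assemble the correction term. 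You instead block-diagonalize $M$ over the spectral projectors $P=\frac1m J_m$ and $Q=I_m-P$, reducing the problem to inverting the two $r\times r$ matrices $Y+(m-1)X=(1-\lambda)I_r+(m-2)J_r$ and $Y-X=(1-\lambda)I_r-2J_r$, each by the rank-one formula for $aI_r+bJ_r$. Both arguments are rank-one-update arguments at heart, but yours has two genuine advantages: the three excluded values $\lambda\ne 1,\,1-2r,\,1+r(m-2)$ fall out transparently as the nonsingularity conditions of the two blocks (in the paper the third condition appears only as the vanishing of the Sherman--Morrison denominator), and the decomposition directly reuses the factors $Y-X$ and $Y+(m-1)X$ already isolated in Theorem~\ref{cpm2m}, so the inverse and the characteristic polynomial are explained by one and the same splitting. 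The paper's version is marginally shorter. One small caution where your own warning about signs bites: the combined numerator of the $J_r$-coefficient is $-m(1-\lambda)=m(\lambda-1)$, not $m(1-\lambda)$ as written; your final coefficient $-\frac{1}{(\lambda+2r-1)(\lambda+r(2-m)-1)}$ is nevertheless correct, so this is a slip of narration rather than of substance.
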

\begin{proof}
Using the same construction as in Theorem \ref{cpm2m}, we can write, $$A\Big(K^{m,r}_{mr}\Big)-\lambda I_{mr} = \Big(I_m\otimes (Y-X)\Big) + (1_{m\times m}\otimes X)=\Big(I_m\otimes (Y-X)\Big) + 1_{mr}1_{mr}^T.$$  Let $A_1=\Big(I_m\otimes (Y-X)\Big).$  Now, recall the Sherman-Morrison formula: If $A$ is a nonsingular square matrix and $ 1+v^{T}A^{-1}u\neq 0$ for some column vectors $u, v$, then $$(A+uv^{T})^{-1}=A^{-1}-{A^{-1}uv^{T}A^{-1} \over 1+v^{T}A^{-1}u}.$$

In order to find $A^{-1}_1$, we need to find $(Y-X)^{-1}$. By symmetry let $\alpha, \beta$ be the diagonal, non-diagonal entries of $(Y-X)^{-1},$ respectively. On solving the following two equations we get the values of $\alpha, \beta.$ 
\begin{eqnarray*}
-\alpha(\lambda+1)-2\beta(r-1)=1,\\
-\beta(\lambda+1)-2\alpha-2\beta(r-2)=0,
\end{eqnarray*}

we get, $$\alpha=\frac{-(\lambda+2r-3)}{(\lambda -1)(\lambda+2r-1)}, \beta=\frac{2}{(\lambda -1)(\lambda+2r-1)}.$$
Thus, $A^{-1}_1$ can be written as,
 $$A_1^{-1}= \frac{1}{(\lambda -1)(\lambda+2r-1)}\Bigg(I_m\otimes \Big(2 A(K_r)-(\lambda+2r-3)I_r\Big)\Bigg).$$

Also, $$A_1^{-1}1_{mr}1^{T}_{mr}A_1^{-1}=\frac{1}{(\lambda+2r-1)^2}\times J,$$

and $$1+1^{T}_{mr}A^{-1}_11_{mr}=\frac{\lambda+r(2-m)-1}{\lambda+2r-1},$$
where, $J$ is all-one matrix of order $mr$.

Hence, $$\Bigg(A\Big(K^{m,r}_{mr}\Big)-\lambda I_{mr}\Bigg)^{-1}= \frac{1}{(\lambda -1)(\lambda+2r-1)}\Bigg(I_m\otimes \Big(2 A(K_r)-(\lambda+2r-3)I_r\Big)\Bigg)-\frac{1}{(\lambda+2r-1)(\lambda+r(2-m)-1)}J$$

$$=\frac{1}{\lambda+2r-1}\Bigg(\Big(\frac{1}{\lambda-1}\Big)I_m\otimes \Big(2 A(K_r)-(\lambda+2r-3)I_r\Big)-\frac{1}{\lambda+r(2-m)-1}J\Bigg).
$$
\end{proof}

\begin{theorem}
The characteristic polynomial of $A(K^{m,r}_n)$ is given by 
\begin{eqnarray*}
(1-\lambda)^{m(r-1)}(1-2r-\lambda)^{m-1}\Bigg(\frac{-\lambda^2-r\Big(2+\lambda(2-m)-m\Big)+1}{\lambda+r(2-m)-1}\Bigg)^{n-mr-1}\\ \times\Bigg(n(1-2r-\lambda)+2r\Big(1+m(r-1)+\lambda\Big)-1+\lambda^2\Bigg).
\end{eqnarray*}
\end{theorem}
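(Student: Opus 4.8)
The plan is to treat the $n-mr$ vertices lying outside the negative cliques as a border and to reduce the computation of $\det\big(A(K^{m,r}_n)-\lambda I_n\big)$ to the already-solved case $K^{m,r}_{mr}$ via a Schur complement. First I would relabel the vertices so that the $mr$ vertices carrying the $m$ negative cliques come first and the remaining $s:=n-mr$ vertices come last. Since every edge between the two vertex sets is positive and the last $s$ vertices induce a positive clique $K_s$, the characteristic matrix takes the block form
$$A(K^{m,r}_n)-\lambda I_n=\begin{bmatrix} P & 1_{mr}1_s^{T}\\ 1_s 1_{mr}^{T} & A(K_s)-\lambda I_s\end{bmatrix},\qquad P:=A(K^{m,r}_{mr})-\lambda I_{mr},$$
where $1_{mr},1_s$ denote all-ones column vectors and $A(K_s)=J_s-I_s$.

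Next I would apply the Schur complement identity with respect to the block $P$. By Theorem \ref{cpm2m}, $\det P=\phi(K^{m,r}_{mr})$ is nonzero except at its finitely many roots, so for all other $\lambda$
$$\det\big(A(K^{m,r}_n)-\lambda I_n\big)=\det P\cdot\det T_s,\qquad T_s:=\big(A(K_s)-\lambda I_s\big)-1_s 1_{mr}^{T}P^{-1}1_{mr}1_s^{T},$$
and the resulting identity of polynomials then extends to all $\lambda$ by continuity. The crux is the scalar $c:=1_{mr}^{T}P^{-1}1_{mr}$, since $1_s 1_{mr}^{T}P^{-1}1_{mr}1_s^{T}=c\,J_s$ collapses the correction term into a multiple of the all-ones matrix. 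I would read $c$ off Lemma \ref{invm2m}: writing $1_{mr}=1_m\otimes 1_r$ and using $1_r^{T}A(K_r)1_r=r(r-1)$, a short computation collapses the two terms of $P^{-1}$ to the clean value $c=\dfrac{-mr}{\lambda+r(2-m)-1}$.

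With $c$ in hand, $T_s=(1-c)J_s-(1+\lambda)I_s$ is of the form $aI_s+bJ_s$, whose spectrum is $-(1+\lambda)$ on the hyperplane $1_s^{\perp}$ (multiplicity $s-1$) together with $(1-c)s-(1+\lambda)$ on $1_s$, so $\det T_s=\big(-(1+\lambda)\big)^{s-1}\big((1-c)s-(1+\lambda)\big)$. It then remains to multiply by $\det P=(1-\lambda)^{m(r-1)}(1-2r-\lambda)^{m-1}\big(1+r(m-2)-\lambda\big)$ and reconcile with the stated form. Here I would invoke the two identities $1+r(m-2)-\lambda=-\big(\lambda+r(2-m)-1\big)$ and $1-c=\dfrac{\lambda+2r-1}{\lambda+r(2-m)-1}$: the factor $1+r(m-2)-\lambda$ of $\det P$ cancels the denominator of $(1-c)s-(1+\lambda)$ and converts the last factor into the displayed quadratic $n(1-2r-\lambda)+2r\big(1+m(r-1)+\lambda\big)-1+\lambda^2$, while $\big(-(1+\lambda)\big)^{s-1}$ is exactly the displayed rational expression raised to the power $n-mr-1$, since the identity $-(1+\lambda)\big(\lambda+r(2-m)-1\big)=-\lambda^2-r\big(2+\lambda(2-m)-m\big)+1$ shows that rational expression equals $-(1+\lambda)$.

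I expect the main obstacle to be computational rather than conceptual: extracting the compact form $c=-mr/(\lambda+r(2-m)-1)$ from the two-term inverse of Lemma \ref{invm2m}, and then executing the final cancellation so that the stray denominator $\lambda+r(2-m)-1$ created by $1-c$ is absorbed by the factor $1+r(m-2)-\lambda$ coming from $\det P$. Should the single block step prove unwieldy, a clean alternative is to border $P$ one vertex at a time: each step multiplies the running determinant by a scalar Schur complement equal to $-\lambda$ minus the sum of all entries of the inverse of the current characteristic matrix, and this product telescopes to the same value $\det T_s$; however, that route requires tracking the inverse row-sum through each update and is messier than the single block computation above.
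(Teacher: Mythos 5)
Your proposal is correct and follows essentially the same route as the paper: the same block partition isolating the $mr$ clique vertices, the same Schur complement with respect to $A(K^{m,r}_{mr})-\lambda I_{mr}$, the same use of Lemma \ref{invm2m} to obtain $1_{mr}^{T}P^{-1}1_{mr}=-mr/(\lambda+r(2-m)-1)$, and the same $aI+bJ$ eigenvalue computation for the Schur complement. Your added observation that the displayed rational factor is just $-(1+\lambda)$ in disguise is a correct simplification but does not change the argument.
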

\begin{proof}
With suitable relabelling of the vertices in $K^{m,r}_n$, the matrix $A(K^{m,r}_n)-\lambda I_n$ can be written in the form $$A(K^{m,r}_n)=\begin{bmatrix} 
 A_1-\lambda I_{mr} & J \\
J^{T} & A_2-\lambda I_{n-mr} 
\end{bmatrix},$$ where, $A_1=A(K^{m,r}_{mr}),  A_2=A(K_{n-mr})$, $J$ is all-one matrix of order $(mr)\times(n-mr)$, and $J^T$ is the transpose of $J.$ By Schur complement formula (\cite{bapat2010graphs},p.4) we have,

$$ \det\Big(A(K^{m,r}_n)-\lambda I_n\Big)=\det(A_1-\lambda I_{mr})\times \det\Big((A_2-\lambda I_{n-mr}) -J^{T}(A_1-\lambda I_{mr})^{-1}J\Big).$$
Using Lemma \ref{invm2m}

$$J^T(A_1-\lambda I_{mr})^{-1}J=\frac{-mr}{\lambda+r(2-m)-1}J_1,$$

$$(A_2-\lambda I_{n-mr}) -J^{T}(A_1-\lambda I_{mr})^{-1}J=\Big(\frac{\lambda+2r-1}{\lambda+r(2-m)-1}\Big)K_{n-mr}+\Big(-\lambda+\frac{mr}{\lambda+r(2-m)-1}\Big)I_{n-mr}.$$

The eigenvalues of the above matrix are 
$$\frac{-\lambda^2-r\Big(2+\lambda(2-m)-m\Big)+1}{\lambda+r(2-m)-1},\ \  \frac{n(\lambda+2r-1)-2r(1+\lambda-m+mr)-\lambda^2+1}{\lambda+r(2-m)-1}$$ with the multiplicity $n-mr-1, \ 1,$ respectively.

By Theorem \ref{cpm2m} 
$$\det(A_1-\lambda I_{mr})=(1-\lambda)^{m(r-1)}(1-2r-\lambda)^{m-1}\Big(1+r(m-2)-\lambda\Big).$$

Hence, 
\begin{eqnarray*}
\phi\Big(A(K^{m,r}_n)\Big)=(1-\lambda)^{m(r-1)}(1-2r-\lambda)^{m-1}\Bigg(\frac{-\lambda^2-r\Big(2+\lambda(2-m)-m\Big)+1}{\lambda+r(2-m)-1}\Bigg)^{n-mr-1}\\ \times\Bigg(n(1-2r-\lambda)+2r\Big(1+m(r-1)+\lambda\Big)-1+\lambda^2\Bigg)
\end{eqnarray*}
\end{proof}

\begin{corollary}
The determinant of $A(K^{m,r}_n)$ is given by $$(1-2r)^{m-1}(-1)^{n-mr-1}\Bigg(n(1-2r)+2r\Big(1+m(r-1)\Big)-1\Bigg).$$
\end{corollary}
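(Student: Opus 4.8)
The plan is to read off the determinant by evaluating the characteristic polynomial $\phi\big(A(K^{m,r}_n)\big)$ from the preceding theorem at $\lambda = 0$, using that $\det\big(A(K^{m,r}_n)\big) = \phi\big(A(K^{m,r}_n)\big)\big|_{\lambda = 0}$, since $\phi$ is defined as $\det\big(A - \lambda I\big)$.

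First I would substitute $\lambda = 0$ into each of the four factors separately. The factor $(1-\lambda)^{m(r-1)}$ becomes $1$, and $(1-2r-\lambda)^{m-1}$ becomes $(1-2r)^{m-1}$, directly producing two of the pieces in the claimed expression. The last factor $n(1-2r-\lambda)+2r\big(1+m(r-1)+\lambda\big)-1+\lambda^2$ becomes $n(1-2r)+2r\big(1+m(r-1)\big)-1$, which matches the final bracket of the corollary verbatim.

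The only step calling for care is the factor raised to the power $n-mr-1$. At $\lambda = 0$ the numerator $-\lambda^2 - r\big(2+\lambda(2-m)-m\big)+1$ reduces to $1 + r(m-2)$, while the denominator $\lambda + r(2-m) - 1$ reduces to $r(2-m) - 1 = -\big(1+r(m-2)\big)$. Since the two differ only by a sign, the fraction collapses to $-1$, so this factor becomes $(-1)^{n-mr-1}$. Multiplying the four evaluated factors together then gives the stated formula. I expect this sign cancellation to be the main, and indeed essentially the only, obstacle, as the remaining substitutions are routine. One should also observe that $\lambda = 0$ is not among the values $1$, $1-2r$, $1+r(m-2)$ excluded in Lemma \ref{invm2m} for admissible integer parameters, so evaluating the rational-looking expression for $\phi$ at $\lambda = 0$ is legitimate, the expression being in fact a genuine polynomial in $\lambda$.
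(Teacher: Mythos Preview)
Your proposal is correct and follows exactly the approach the paper intends: the authors state that ``the determinant and the eigenvalues readily follow from the characteristic polynomial, hence they are stated as corollaries without proofs,'' and your substitution $\lambda=0$ into each factor of $\phi\big(A(K^{m,r}_n)\big)$ carries this out in full detail. Your remark that the rational-looking factor is actually the polynomial $-(\lambda+1)$ in disguise (so that evaluation at $\lambda=0$ is unproblematic) is a nice clarification the paper leaves implicit.
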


\begin{corollary}
The eigenvalues of $A(K^{m,r}_n)$ are $$1,\ (1-2r),\  \frac{(n-2r)\pm\sqrt{8mr-8r-4n-8mr^2+4+(n+2r)^2}}{2}$$ and the roots of the polynomial $$\Bigg(\frac{-\lambda^2-r\Big(2+\lambda(2-m)-m\Big)+1}{\lambda+r(2-m)-1}\Bigg),$$ with the multiplicity $m(r-1),\ (m-1),\ 1,\ n-mr-1$, respectively.
\end{corollary}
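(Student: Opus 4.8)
The plan is to read the eigenvalues directly off the characteristic polynomial
$$\phi\big(A(K^{m,r}_n)\big)=(1-\lambda)^{m(r-1)}(1-2r-\lambda)^{m-1}\left(\frac{-\lambda^2-r\big(2+\lambda(2-m)-m\big)+1}{\lambda+r(2-m)-1}\right)^{n-mr-1}\Big(n(1-2r-\lambda)+2r\big(1+m(r-1)+\lambda\big)-1+\lambda^2\Big)$$
established in the preceding theorem, treating each of its four factors separately; the exponent on a factor is exactly the multiplicity of the root it produces. The first two factors are immediate: $(1-\lambda)^{m(r-1)}$ yields the eigenvalue $1$ with multiplicity $m(r-1)$, and $(1-2r-\lambda)^{m-1}$ yields $1-2r$ with multiplicity $m-1$.

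For the third factor one should first confirm that it is genuinely a polynomial, so that its zeros are honest eigenvalues. I would do this by polynomial division, checking that the numerator factors as $-\lambda^2-r\big(2+\lambda(2-m)-m\big)+1=-(\lambda+1)\big(\lambda+r(2-m)-1\big)$; the denominator divides out and the whole factor collapses to $(-1)^{n-mr-1}(\lambda+1)^{n-mr-1}$. Thus the ``roots of the polynomial'' appearing in the statement are in fact the single value $\lambda=-1$, occurring with multiplicity $n-mr-1$.

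The last two eigenvalues come from the quadratic factor. Expanding it into standard form gives $\lambda^2-(n-2r)\lambda+c=0$ with $c=n-2nr+2r+2mr^2-2mr-1$, and the quadratic formula yields $\lambda=\tfrac12\big((n-2r)\pm\sqrt{(n-2r)^2-4c}\big)$. The only step requiring care --- and the one place an algebraic slip could occur --- is verifying that the discriminant $(n-2r)^2-4c$ equals the stated expression; expanding and regrouping the cross term $+4nr$ turns $n^2-4nr+4r^2$ into $(n+2r)^2$ and leaves precisely $8mr-8r-4n-8mr^2+4+(n+2r)^2$. These two roots have multiplicity $1$, and since $m(r-1)+(m-1)+(n-mr-1)+2=n$, the list is complete and no eigenvalue has been overlooked.
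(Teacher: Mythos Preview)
Your proof is correct and follows exactly the approach the paper intends: the authors explicitly state that the eigenvalues ``readily follow from the characteristic polynomial'' and omit the proof, so reading the roots off the four factors of $\phi(K^{m,r}_n)$ is precisely what is expected. Your treatment is in fact more thorough than the paper's, since you verify that the rational factor is a genuine polynomial by exhibiting the factorisation $-\lambda^2-r\big(2+\lambda(2-m)-m\big)+1=-(\lambda+1)\big(\lambda+r(2-m)-1\big)$, thereby identifying the ``roots of the polynomial'' in the statement as simply the single eigenvalue $-1$; the paper leaves this rational expression unsimplified in the corollary.
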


\section{Complete graph with negative cliques of different order} \label{cgnkdo}
In this section we consider the complete graph $G$ having disjoint negative cliques of different orders which cover the vertex-set of $G$. Assume that $G$ have $k$  negative cliques with order $n_1, n_2, \hdots, n_k,$ respectively. Let $n_1\le n_2\le \hdots \le n_k.$  Thus, the adjacency matrix of such a graph $G$ can be written as \begin{equation} A(G)=\begin{bmatrix}-A(K_{n_1}) & J_{12} & \hdots & J_{1k}\\ J_{12}^T & -A(K_{n_2}) & \hdots & J_{2k} \\ \vdots & \vdots & \ddots & \vdots \\ J_{1k}^T & J_{2k}^T & \hdots & -A(K_{n_k})\end{bmatrix},\end{equation}

where, $A(K_{n_i})$ denotes the adjacency matrix of the positive clique $K_{n_i}, i=1,\hdots,k$ and $J_{pq}$ denotes the all-one matrix of order $n_p\times n_q.$ To calculate the eigenvalues we use approach similar to in \cite{esser1980spectrum} for the complete multipartite graph. Note that, it is enough to investigate the eigenvalues of $A(G)-I_n$ in order to investigate the eigenvalues of $A(G).$ Indeed, $\lambda$ is an eigenvalue of $A(G)-I_n$ corresponding to an eigenvector $X\in {R}^n$ if and only if $\lambda+1$ is an eigenvalue of $G$ corresponding to the eigenvector $X.$ Observe that the diagonal blocks of $A(G)-I_n$ are $-J_{n_in_i}, i=1,\hdots,k,$ and the off diagonal blocks are same as that of $A(G).$

We first prove the following lemma which is used in the sequel.

\begin{lemma}\label{Lem:sec2}
Let \begin{equation}
N=\begin{bmatrix}
-n_1 & n_2  &\hdots  & n_k \\
n_1& -n_2  & \hdots  & n_k \\
 \vdots &\vdots & \ddots &\vdots \\
n_1 & n_2 & \hdots & -n_k
\end{bmatrix}
\end{equation} be a matrix of order $k\times k.$ Let $N_{\lambda}=N-\lambda I_k.$ Then $$\det(N_{\lambda})=\left[ \prod_{i=1}^{k}(-2n_i-\lambda)+\sum_{i=1}^{k}n_i\prod_{j=1,j \ne i}^{k}(-2n_j-\lambda)\right].$$
\end{lemma}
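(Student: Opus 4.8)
The plan is to exploit the rank-one structure hidden in $N$. Writing $\mathbf 1=(1,\dots,1)^T$ for the all-ones column vector and $D=\mathrm{diag}(n_1,\dots,n_k)$, I first observe that the $j$-th column of $N$ is $n_j\mathbf 1-2n_je_j$, so that $N=\mathbf 1(n_1,\dots,n_k)-2D$ is a rank-one perturbation of a diagonal matrix. Consequently the $j$-th column of $N_\lambda=N-\lambda I_k$ is $n_j\mathbf 1-(2n_j+\lambda)e_j$.

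The main tool is multilinearity of the determinant in its columns. I would split each column as the sum $n_j\mathbf 1+\bigl(-(2n_j+\lambda)e_j\bigr)$ and expand $\det(N_\lambda)$ into $2^k$ determinants indexed by the subset $S\subseteq\{1,\dots,k\}$ of columns in which the summand $n_j\mathbf 1$ is selected. The key observation is that whenever $|S|\ge 2$ the resulting matrix has two columns that are both scalar multiples of $\mathbf 1$, hence is singular and contributes $0$. Only $S=\emptyset$ and the singletons $S=\{i\}$ survive.

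For $S=\emptyset$ every column is $-(2n_j+\lambda)e_j$, so the matrix is diagonal and contributes $\prod_{j=1}^{k}(-2n_j-\lambda)$. For $S=\{i\}$ the $i$-th column is $n_i\mathbf 1$ while every other column is $-(2n_j+\lambda)e_j$; since the columns $e_j$ for $j\ne i$ force the only nonvanishing permutation term to be the identity, a short permutation/Laplace argument gives the contribution $n_i\prod_{j\ne i}(-2n_j-\lambda)$. Summing the surviving terms yields exactly the claimed formula.

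The computation is essentially bookkeeping; the only points needing care are the sign and index tracking in the singleton case and the justification that $|S|\ge 2$ forces a zero determinant (two columns parallel to $\mathbf 1$). As an alternative route I could write $N_\lambda=\mathbf 1\,(n_1,\dots,n_k)-M$ with $M=\mathrm{diag}(2n_i+\lambda)$ and apply the matrix determinant lemma to get $\det(N_\lambda)=(-1)^k\det(M)\bigl(1-(n_1,\dots,n_k)M^{-1}\mathbf 1\bigr)$; this is faster but valid only when $M$ is invertible, so I would then extend to the finitely many exceptional $\lambda$ by a polynomial-identity argument. I prefer the multilinearity expansion precisely because it sidesteps that invertibility issue and produces the two terms of the statement directly.
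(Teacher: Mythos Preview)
Your proof is correct. The multilinearity expansion cleanly isolates the two surviving families of terms, and your handling of the singleton case $S=\{i\}$ is right: the columns $-(2n_j+\lambda)e_j$ for $j\ne i$ force the identity permutation, so the sign is $+1$ and the contribution is exactly $n_i\prod_{j\ne i}(-2n_j-\lambda)$.

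The paper takes a different but closely related route. It borders $N_\lambda$ to form the $(k+1)\times(k+1)$ matrix $\begin{bmatrix}1 & -\mathbf n^T\\ 0_k & N_\lambda\end{bmatrix}$, then adds the first row to each subsequent row to obtain $\begin{bmatrix}1 & -\mathbf n^T\\ \mathbf 1_k & -2\,\mathrm{diag}(\mathbf n)-\lambda I_k\end{bmatrix}$, and finally expands this bordered determinant directly. This is essentially the standard bordering proof of the matrix determinant lemma, carried out in coordinates; your multilinearity argument is the column-by-column combinatorial version of the same identity. Both exploit the decomposition $N_\lambda=\mathbf 1\mathbf n^T-(2D+\lambda I)$ that you identified. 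Your approach has the advantage of being self-contained and avoiding the auxiliary bordered matrix, while the paper's approach makes the row-reduction step explicit and perhaps more visually transparent. Your remark about preferring multilinearity over the matrix determinant lemma to sidestep invertibility is apt, though the paper's bordering argument also avoids any invertibility hypothesis.
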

\begin{proof}
Let $\textbf{n}=[n_1 \,\, n_2 \,\, \hdots \,\, n_k]^T\in{R}^k.$ Then, $$\det(N_{\lambda})=\det\left(\begin{bmatrix} 1 & -\textbf{n}^T \\ 0_k & N_{\lambda}\end{bmatrix}\right)=\det\left(\begin{bmatrix}1 & -\textbf{n}^T\\ \textbf{1}_{k} & -2\, \mbox{diag}(\textbf{n})-\lambda I_k\end{bmatrix}\right).$$ Expanding the right hand side, the desired result follows.
\end{proof}

Now, we have the following theorem which completely characterizes the eigenvalues of $A(G)-I_n$, and hence the eigenvalues of $G$.

\begin{theorem}\label{Thm:Sec2}
Let $G$ be a complete graph on $n$ vertices with $k$ disjoint negative cliques of order $n_1, n_2,\hdots, n_k$ such that $n_1+n_2+\hdots+n_k=n.$ Suppose $\overline{n}_i, i=1,\hdots, t, \ \ t\leq k$ be the distinct numbers in the set $\{n_1,\hdots,n_k\}.$ Then,

 \begin{enumerate} \item[(a)] $0$ is an eigenvalue of $A(G)-I_n$ with algebraic multiplicity $n-k$ corresponding to eigenvectors $X=[X_1 \,\, X_2 \,\, \hdots \,\, X_k]^T, X_i\in{R}^{n_i}$ such that $\Sigma X_i=0$ for all $i,$ where $\Sigma X_i$ denotes the sum of entries in $X_i$. \item[(b)] $-2\overline{n}_i, i=1,\hdots, t$ are the  nonzero eigenvalues of $A(G)-I_n$ with multiplicity $m_i-1$ where $m_i$ is the number of distinct clusters in $G$ of order $\overline{n}_i.$ The other nonzero eigenvalues are the roots of the polynomial $1+p(\lambda)$ where $$p(\lambda)=\sum_{i=1}^t \frac{m_i\overline{n}_i}{-2\overline{n}_i-\lambda}.$$ Moreover, the eigenvectors corresponding to the nonzero eigenvalues of $A(G)-I_n$ are of the form $X=[\alpha_1\textbf{1}_{n_1}^T \,\, \alpha_2\textbf{1}_{n_2}^T \,\, \hdots \,\, \alpha_{k}\textbf{1}_{n_k}^T]^T$ where $0_k\neq \alpha=[\alpha_1 \,\, \alpha_2 \,\, \hdots \alpha_k]^T$ satisfies $N_{\lambda}\alpha=0.$ Such an $\alpha$ determines an eigenvector corresponds to the eigenvalue $\lambda$ for which $\lambda(\alpha_i-\alpha_j)=2(n_j\alpha_j -n_i\alpha_i), i,j=1,\hdots, k.$
\end{enumerate}
\end{theorem}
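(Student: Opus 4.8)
The plan is to exploit the rigid all-one block structure of $B := A(G)-I_n$. Since $A(K_{n_i})=J_{n_i}-I_{n_i}$, each diagonal block of $B$ is $-A(K_{n_i})-I_{n_i}=-J_{n_i}$, while every off-diagonal block is the all-one matrix $J_{pq}$; thus $B$ is assembled entirely from all-one blocks carrying a minus sign on the diagonal. This structure invites an orthogonal splitting of ${R}^n$ into two $B$-invariant subspaces: the zero-sum space $V_0=\{X=[X_1\,\cdots\,X_k]^T : \Sigma X_i=0 \text{ for all } i\}$, of dimension $\sum_i(n_i-1)=n-k$, and the block-constant space $V_1=\{\Phi(\alpha):\alpha\in{R}^k\}$ of dimension $k$, where $\Phi(\alpha)=[\alpha_1\textbf{1}_{n_1}^T\,\cdots\,\alpha_k\textbf{1}_{n_k}^T]^T$. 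These are orthogonal complements, since $\langle\Phi(\alpha),X\rangle=\sum_i\alpha_i(\textbf{1}_{n_i}^TX_i)=0$ for $X\in V_0$.

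Part (a) is then immediate: an all-one block annihilates any zero-sum vector, so if $\Sigma X_q=0$ for every $q$ then $J_{n_i}X_i=0$ and $J_{iq}X_q=0$, whence $BX=0$. Therefore $V_0\subseteq\ker B$, which gives $0$ as an eigenvalue with exactly the stated eigenvectors and (since $B$ is symmetric) algebraic multiplicity at least $n-k$.

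For part (b) the central step is to identify the action of $B$ on $V_1$. Using $J_{iq}(\alpha_q\textbf{1}_{n_q})=n_q\alpha_q\textbf{1}_{n_i}$, a direct computation shows that the $i$-th block of $B\Phi(\alpha)$ equals $\big(-n_i\alpha_i+\sum_{q\ne i}n_q\alpha_q\big)\textbf{1}_{n_i}$, i.e. $B\Phi(\alpha)=\Phi(N\alpha)$ with $N$ exactly the matrix of Lemma \ref{Lem:sec2}. Since $\Phi$ is an isomorphism onto $V_1$, the restriction $B|_{V_1}$ is similar to $N$, so its eigenvalues and their algebraic multiplicities coincide with those of $N$, and its eigenvectors are the $\Phi(\alpha)$ with $N_\lambda\alpha=0$. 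Subtracting the $i$-th and $j$-th scalar equations of $N\alpha=\lambda\alpha$, both of the form $\sum_l n_l\alpha_l-2n_p\alpha_p=\lambda\alpha_p$ (for $p=i,j$), yields at once the relation $\lambda(\alpha_i-\alpha_j)=2(n_j\alpha_j-n_i\alpha_i)$.

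Finally I would read off the eigenvalues of $N$ from Lemma \ref{Lem:sec2}. Factoring $\prod_i(-2n_i-\lambda)$ out of the determinant formula and grouping equal cliques gives $\det(N_\lambda)=\prod_{i=1}^{t}(-2\overline{n}_i-\lambda)^{m_i}\big(1+p(\lambda)\big)$, where $\sum_i \frac{n_i}{-2n_i-\lambda}=p(\lambda)$. The step I expect to need the most care is the pole-zero bookkeeping: $1+p(\lambda)$ has a simple pole at each $\lambda=-2\overline{n}_i$, so the factor $(-2\overline{n}_i-\lambda)^{m_i}$ collapses to a zero of order $m_i-1$, producing the eigenvalue $-2\overline{n}_i$ with multiplicity $m_i-1$; the remaining $t$ roots are those of $1+p(\lambda)=0$, and the count $\sum_i(m_i-1)+t=k$ shows these exhaust the spectrum of $N$. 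Combining $V_0$ and $V_1$ then exhausts ${R}^n$. One subtlety worth flagging is that $1+p(0)=1-\tfrac{k}{2}$, so for $k\ne 2$ no eigenvalue of $N$ vanishes and the multiplicity of $0$ is exactly $n-k$ as claimed, while the degenerate case $k=2$ would have to be examined separately.
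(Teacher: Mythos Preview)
Your proof is correct and follows essentially the same route as the paper: both identify the all-one block structure of $B=A(G)-I_n$, show that zero-sum block vectors lie in the kernel while block-constant vectors reduce the nonzero spectrum to the $k\times k$ matrix $N$, and then factor $\det(N_\lambda)$ via Lemma~\ref{Lem:sec2}. Your explicit invariant-subspace framing $V_0\oplus V_1$ is slightly cleaner than the paper's direct manipulation of the eigenvalue equation, and your remark on the $k=2$ degeneracy (where $1+p(0)=0$) flags a boundary case the paper does not address.
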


\begin{proof}\begin{enumerate}\item[(a)] Let $X=[X_1 \,\, X_2 \,\, \hdots \,\, X_k]^T, X_i\in{R}^{n_i}$ such that $\Big(A(G)-I_n\Big)X=0.$ Then for $i,j\in \{1, \hdots, k\},$ $$ \sum_{r\neq i, r=1}^k \Sigma X_r-\Sigma X_i = \sum_{r\neq j, r=1}^k \Sigma X_r -\Sigma X_j  =0.$$ This yields $\Sigma X_i=0$ for all $i=1,\hdots,k.$ Since dimension of the vector space $\{X_i\in{R}^{n_i} : \Sigma X_i=0\}$ over ${R}$ is $n_i-1,$ the desired result follows.

\item[(b)] Let $\lambda\neq 0$ and $\Big(A(G)-I_n\Big)X=\lambda X$ where $X=[X_1 \,\, X_2 \,\, \hdots \,\, X_k]^T, X_i\in{R}^{n_i}.$ For any $i,$ consider the vector $X_i$, any two entries of $X_i,$ say $x^{(i)}_p, x^{(i)}_q,$ satisfy \begin{equation}\label{eqn1:thm1}\lambda x^{(i)}_p = \sum_{r\neq i, r=1}^{k} \Sigma X_r - \Sigma X_i = \lambda x^{(i)}_q.\end{equation} Since, $\lambda\neq 0, X_i=\alpha_i \textbf{1}_{n_i}$ for some constant $\alpha_i$ for all $i=1,\hdots, k.$ Setting $X=[\alpha_1\textbf{1}_{n_1}^T \,\, \alpha_2\textbf{1}_{n_2}^T \,\, \hdots \,\, \alpha_{k}\textbf{1}_{n_k}^T]^T,$ by Equation (\ref{eqn1:thm1}) we have

\begin{equation}\label{eqn2:thm1} \lambda\alpha_i=\sum_{r\neq i, r=1}^k n_r\alpha_r-n_i\alpha_i.\end{equation} For any $j\neq i,$ similarly, we have
 \begin{equation}
 \lambda\alpha_j=\sum_{r\neq j, r=1}^k n_r\alpha_r-n_j\alpha_j.
 \end{equation} 

Adding these above two equations, we obtain $\lambda(\alpha_i-\alpha_j)=2(n_j\alpha_j -n_i\alpha_i)$ for any $i,j\in\{1,\hdots, k\}.$

In order to find all $\alpha_i, i=1,\hdots, k$ which satisfy Equation (\ref{eqn2:thm1}) for each $i,$ it gives the linear system $N_{\lambda}\alpha=0.$ Note that both $\lambda$ and $\alpha$ are unknown in this linear system and for the existence of a nonzero solution vector $\alpha,$ we must have $\det(N_{\lambda})=0.$ Thus, the nonzero eigenvalues of $A(G)-I_n$ are the roots of the polynomial $\det(N_{\lambda}).$ Now from Lemma \ref{Lem:sec2}, we have \begin{eqnarray*}\label{e10}
\det(T_\lambda) &=& \left[ \prod_{i=1}^{t}(-2\bar{n}_i-\lambda)^{m_i}+\sum_{i=1}^{t}\frac{m_i\bar{n}_i}{-2\bar{n}_i-\lambda}\prod_{j=1}^{t}(-2\bar{n}_j-\lambda)^{m_j}\right] \\
&=& \prod_{i}^{k}(-2\bar{n_i}-\lambda)^{m_i-1} \left[\prod_{i=1}^{t}(-2\bar{n}_i-\lambda)+\sum_{i=1}^{t}m_i\bar{n}_i \prod_{j=1,j\ne i}^{t}(-2\bar{n}_j-\lambda)\right].
\end{eqnarray*} Hence, the proof follows. 
  \end{enumerate}
\end{proof} 

\begin{lemma}\label{roots}
 Let $\lambda^{\star}_{1}> \lambda^{\star}_{2}> \hdots > \lambda^{\star}_{t-1}>\lambda^{\star}_{t}$ be the roots of polynomial $1+p(\lambda).$ Then \begin{equation}\label{eqn1:sec1}\lambda^{\star}_{1}>-2\bar{n}_{1}>\lambda^{\star}_{2} >-2\bar{n}_{2} \hdots >\lambda^{\star}_{t-1}>-2\bar{n}_{t-1}>\lambda^{\star}_{t}>-2\bar{n}_{t}.\end{equation} In general, if $\lambda_1\ge \lambda_2\ge \hdots\ge \lambda_{k-1}\ge \lambda_k$ are the nonzero eigenvalues of $A(G)-I_n$, then \begin{equation}\label{eqn2:sec1}\lambda_{1}\geq -2n_{1}\geq\lambda_{2}\geq -2n_{2} \hdots  \geq \lambda_{k-1}\geq -2n_{k-1}\geq \lambda_{k}\geq -2n_{k}.\end{equation}
\end{lemma}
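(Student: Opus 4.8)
The plan is to prove the two displayed inequality chains separately, obtaining \eqref{eqn2:sec1} from \eqref{eqn1:sec1} together with the eigenvalue description in Theorem \ref{Thm:Sec2}. For \eqref{eqn1:sec1} I treat
$$h(\lambda):=1+p(\lambda)=1+\sum_{i=1}^{t}\frac{m_i\overline{n}_i}{-2\overline{n}_i-\lambda}$$
as a real rational function whose only poles are the distinct points $-2\overline{n}_1>-2\overline{n}_2>\dots>-2\overline{n}_t$ (recall $\overline{n}_1<\dots<\overline{n}_t$). Since every $m_i>0$ and $\overline{n}_i>0$, differentiating term by term gives $p'(\lambda)=\sum_{i=1}^{t}m_i\overline{n}_i(-2\overline{n}_i-\lambda)^{-2}>0$, so $h$ is strictly increasing on each open interval between consecutive poles.

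Next I would record the one-sided limits of $h$ at the poles and at $\pm\infty$: as $\lambda\to\pm\infty$ one has $p(\lambda)\to 0$, hence $h\to 1$; approaching a pole $-2\overline{n}_i$ from the left the $i$-th term tends to $+\infty$ and from the right to $-\infty$. Combining monotonicity with these limits, on $(-2\overline{n}_1,\infty)$ the function $h$ increases from $-\infty$ to $1$ and has exactly one zero, lying strictly above $-2\overline{n}_1$; on each middle interval $(-2\overline{n}_{i+1},-2\overline{n}_i)$ it increases from $-\infty$ to $+\infty$ and has exactly one zero strictly between the two poles; and on $(-\infty,-2\overline{n}_t)$ it stays above $1$ and has no zero. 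Clearing denominators shows that $1+p(\lambda)$ is a polynomial of degree $t$, so these $t$ zeros are all of its roots, and labelling them $\lambda^{\star}_1>\dots>\lambda^{\star}_t$ gives precisely \eqref{eqn1:sec1}.

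To obtain \eqref{eqn2:sec1} I would assemble the full multiset of nonzero eigenvalues of $A(G)-I_n$ supplied by Theorem \ref{Thm:Sec2}: the $t$ simple roots $\lambda^{\star}_1,\dots,\lambda^{\star}_t$ together with each $-2\overline{n}_i$ repeated $m_i-1$ times (note that $\lambda^{\star}_i$ can never coincide with a pole, since $h$ blows up there, so the two families are disjoint). Because $\lambda^{\star}_i$ lies strictly between $-2\overline{n}_i$ and $-2\overline{n}_{i-1}$, sorting this multiset in decreasing order places $\lambda^{\star}_1$ first, then the $m_1-1$ copies of $-2\overline{n}_1$, then $\lambda^{\star}_2$, then the copies of $-2\overline{n}_2$, and so on. Matching this sorted length-$k$ list against the sorted list $-2n_1\ge-2n_2\ge\dots\ge-2n_k$ (in which $-2\overline{n}_i$ occurs $m_i$ times) index by index, a short bookkeeping argument shows that at each position the eigenvalue dominates the corresponding $-2n_j$, which in turn dominates the next eigenvalue, with equality exactly at the repeated cluster values; this is \eqref{eqn2:sec1}.

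The sign analysis and monotonicity of the first part are standard secular-equation facts, so I expect the main obstacle to be the second part, namely carefully translating the strict interlacing of the $t$ distinct roots and poles in \eqref{eqn1:sec1} into the interlacing of the full length-$k$ sorted lists in \eqref{eqn2:sec1}. The care needed there is twofold: tracking how the multiplicities $m_i-1$ of the repeated cluster eigenvalues interleave with the simple roots $\lambda^{\star}_i$, and distinguishing the strict inequalities (inherited from \eqref{eqn1:sec1}) from the equalities (which arise precisely when a cluster value $\overline{n}_i$ is repeated, i.e.\ $m_i\ge 2$).
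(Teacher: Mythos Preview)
Your proposal is correct and follows essentially the same approach as the paper: both arguments establish \eqref{eqn1:sec1} by using the strict monotonicity of $p(\lambda)$ (equivalently $h(\lambda)=1+p(\lambda)$) on each open interval between consecutive poles $-2\overline{n}_{i+1}$ and $-2\overline{n}_i$, together with the one-sided limits at the poles and the intermediate value theorem, and then derive \eqref{eqn2:sec1} from Theorem~\ref{Thm:Sec2}. Your treatment is in fact somewhat more careful than the paper's---you explicitly verify via the derivative that $h$ is increasing, you check that no root lies in $(-\infty,-2\overline{n}_t)$, you count degrees to confirm all roots are accounted for, and you spell out the bookkeeping for \eqref{eqn2:sec1} (which the paper dispatches in a single sentence).
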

\begin{proof}
Polynomial $p(\lambda)$ is continuous and strictly increasing  in interval $(-2\bar{n}_{i+1}, -2\bar{n}_{i})$. Also, $\lim\limits_{\lambda\to (-2\bar{n}_{i})^{-}}p(\lambda)=+\infty$ and $\lim\limits_{\lambda\to (-2\bar{n}_{i+1})^{+}}p(\lambda)=-\infty$ for $i=1,2\hdots t-1$. Hence, using intermediate value theorem there exists a root $\lambda^{*}_i$ of equation $1+p(\lambda)=0$ in interval $(-2\bar{n}_{i+1},-2\bar{n}_{i})$ for $i=1,2\hdots t-1$, satisfying $-2\bar{n}_{i}>\lambda^{*}_{i+1}>-2\bar{n}_{i+1}$. For $i=1$ $\lim\limits_{\lambda\to (-2\bar{n}_{1})^{+}}p(\lambda)=-\infty$ and $\lim\limits_{\lambda\to +\infty}p(\lambda)=0$. Again, using intermediate value theorem           $\lambda^{\star}_{1}>-2\bar{n}_1$ which proves (\ref{eqn1:sec1}). Similarly, (\ref{eqn2:sec1}) follows from Theorem \ref{Thm:Sec2}.
\end{proof}

\begin{corollary}
Let $\alpha_1\geq\alpha_2\geq\hdots\geq\alpha_n$ be the eigenvalues of $A$, and $\alpha^{*}_1>\alpha^{*}_2>\hdots>\alpha^{*}_{t-1}> \alpha^{*}_t$ be its non-zero non-integer eigenvalues. Then,
\begin{enumerate}
\item \begin{equation}\alpha^{\star}_{1}>-2\bar{n}_{1}+1>\alpha^{\star}_{2} >-2\bar{n}_{2}+1 \hdots >\alpha^{\star}_{t-1}>-2\bar{n}_{t-1}+1>\alpha^{\star}_{t}>-2\bar{n}_{t}+1.\end{equation}
\item \begin{equation}\alpha_{1}\geq -2n_{1}+1\geq\alpha_{2}\geq -2n_{2}+1 \hdots  \geq \alpha_{k-1}\geq -2n_{k-1}+1\geq \alpha_{k}\geq -2n_{k}+1.\end{equation}
\end{enumerate}
\end{corollary}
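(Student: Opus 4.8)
The plan is to reduce both parts to the interlacing already established in Lemma~\ref{roots}, exploiting the rigid relationship between the spectra of $A = A(G)$ and $A(G) - I_n$. Recall from the discussion preceding Lemma~\ref{Lem:sec2} that $\lambda$ is an eigenvalue of $A(G) - I_n$ with eigenvector $X$ if and only if $\lambda + 1$ is an eigenvalue of $A$ with the same eigenvector $X$. Consequently the whole spectrum of $A$ is obtained from that of $A(G) - I_n$ by the rigid shift $x \mapsto x + 1$, which preserves both the ordering and the algebraic multiplicities. This shift is the only structural input I would need beyond Lemma~\ref{roots} itself, so no genuinely new computation arises.

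First I would pin down the correspondence of eigenvalues. By Theorem~\ref{Thm:Sec2}, the spectrum of $A(G) - I_n$ consists of $0$ with multiplicity $n - k$, the integers $-2\overline{n}_i$ with multiplicity $m_i - 1$, and the $t$ roots $\lambda^{\star}_1 > \cdots > \lambda^{\star}_t$ of $1 + p(\lambda)$. Shifting by $+1$, the spectrum of $A$ becomes $1$ with multiplicity $n-k$, the integers $-2\overline{n}_i + 1$, and the numbers $\lambda^{\star}_i + 1$. Since $0$ and each $-2\overline{n}_i$ are integers, the only eigenvalues of $A$ that can fail to be integral --- and in particular the non-zero, non-integral ones labelled $\alpha^{\star}_1 > \cdots > \alpha^{\star}_t$ --- are exactly the $\alpha^{\star}_i = \lambda^{\star}_i + 1$. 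For part (1) I would then add $1$ throughout the chain \eqref{eqn1:sec1}; monotonicity of $x \mapsto x+1$ turns $\lambda^{\star}_1 > -2\overline{n}_1 > \lambda^{\star}_2 > \cdots$ into $\alpha^{\star}_1 > -2\overline{n}_1 + 1 > \alpha^{\star}_2 > \cdots$, which is precisely the claimed inequality.

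For part (2) the argument is identical in spirit: the $k$ non-zero eigenvalues $\lambda_1 \ge \cdots \ge \lambda_k$ of $A(G) - I_n$ shift to $k$ eigenvalues $\alpha_i := \lambda_i + 1$ of $A$, and adding $1$ to the chain \eqref{eqn2:sec1} yields $\alpha_1 \ge -2n_1 + 1 \ge \alpha_2 \ge \cdots \ge \alpha_k \ge -2n_k + 1$. The one point that needs care --- and which I regard as the main, if modest, obstacle --- is the indexing in part (2): the $\alpha_i$ appearing there are the images of the \emph{non-zero} eigenvalues of $A(G) - I_n$, not literally the top $k$ entries of the fully sorted list $\alpha_1 \ge \cdots \ge \alpha_n$. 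Indeed, since $-2n_1 + 1 \le -1 < 1$, the estimate $\alpha_2 \le -2n_1 + 1$ forces $\alpha_2, \ldots, \alpha_k$ to lie strictly below the $(n-k)$-fold eigenvalue $1$; to prevent this spurious eigenvalue $1$ from disrupting the interlacing one must read $\alpha_1, \ldots, \alpha_k$ as exactly the shifts of the $k$ non-zero eigenvalues of $A(G) - I_n$, equivalently the spectrum of $A$ with the $(n-k)$ copies of $1$ deleted (legitimate because each $\lambda_i \neq 0$, so no $\lambda_i + 1$ equals $1$). Once this identification is fixed, both inequalities are nothing more than Lemma~\ref{roots} translated by $+1$, and no further work is required.
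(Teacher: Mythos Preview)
Your proposal is correct and takes essentially the same approach as the paper: the paper's proof is the one-line observation that $\alpha_i = \lambda_i + 1$ for all $i$, so the corollary is Lemma~\ref{roots} shifted by $+1$. Your more detailed treatment, in particular the care you take with the indexing in part~(2) (noting that the $\alpha_1,\ldots,\alpha_k$ must be read as the shifts of the $k$ nonzero eigenvalues of $A(G)-I_n$ rather than the top $k$ entries of the full sorted spectrum), is a genuine clarification that the paper glosses over.
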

\begin{proof} It directly follows from the fact that $\alpha_i=\lambda_{i}+1 \ \forall \ i$ and Lemma \ref{roots}.
\end{proof}

\section{Regular Star Block Graph} \label{sbg}
In this section we calculate the eigenvalues of $r$-regular star block graph.

\begin{theorem} \label{lthm}
Let $G$ be a $r$-regular star block graph having $k$ blocks. If $l$ blocks are negative cliques for $l \le k $, then 
$$\phi(G)=l\Big(\phi(\tilde{K_r})\phi(\tilde{K}_{r-1})^{l-1}\phi({K}_{r-1})^{k-l}\Big)+(k-l)\Big(\phi({K_r})\phi(\tilde{K}_{r-1})^{l}\phi({K}_{r-1})^{k-l-1}\Big)+\lambda\Big( \phi(\tilde{K}_{r-1})^{l}\phi({K}_{r-1})^{k-l} \Big) ,$$ where, 
$\phi(\tilde{K_r})$ denotes the characteristic polynomial of a negative clique of order $r$. 
\end{theorem}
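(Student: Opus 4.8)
The plan is to exploit the single cut-vertex of $G$ and reduce $\phi(G)$ to the characteristic polynomials of the individual blocks by a Schur-complement (arrow-matrix) expansion. Let $v$ be the unique cut-vertex, and let $B_1,\dots,B_k$ be the blocks, each a clique on $r$ vertices containing $v$; exactly $l$ of them are negative cliques and $k-l$ positive. Order the $n=k(r-1)+1$ vertices so that the non-cut vertices come first, grouped block by block, and $v$ comes last. For each $i$ put $D_i=A(B_i-v)-\lambda I_{r-1}$, so that $\det D_i=\phi(B_i-v)$ equals $\phi(\tilde{K}_{r-1})$ when $B_i$ is negative and $\phi(K_{r-1})$ when $B_i$ is positive, and let $b_i\in\{-1,1\}^{r-1}$ record the uniformly signed adjacencies between $v$ and $B_i-v$. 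Then
$$A(G)-\lambda I_n=\begin{bmatrix}\operatorname{diag}(D_1,\dots,D_k) & b\\ b^{T} & -\lambda\end{bmatrix},\qquad b=\begin{bmatrix}b_1\\ \vdots\\ b_k\end{bmatrix}.$$

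First I would apply the Schur complement with respect to the block-diagonal leading block, which is nonsingular off finitely many $\lambda$, to obtain
$$\phi(G)=\Big(\prod_{i}\det D_i\Big)\Big(-\lambda-\sum_i b_i^{T}D_i^{-1}b_i\Big)=-\lambda\prod_i\det D_i-\sum_i\big(b_i^{T}\operatorname{adj}(D_i)\,b_i\big)\prod_{j\ne i}\det D_j,$$
an identity that extends to all $\lambda$ since both sides are polynomials. The next step is to eliminate the scalars $b_i^{T}\operatorname{adj}(D_i)b_i$ in favour of block characteristic polynomials: applying the same expansion to the single block $B_i$, whose matrix is $\begin{bmatrix}D_i & b_i\\ b_i^{T}&-\lambda\end{bmatrix}$, gives $\phi(B_i)=-\lambda\det D_i-b_i^{T}\operatorname{adj}(D_i)b_i$, hence $b_i^{T}\operatorname{adj}(D_i)b_i=-\lambda\,\phi(B_i-v)-\phi(B_i)$. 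Substituting back and collecting terms yields a Schwenk-type cut-vertex formula $\phi(G)=\sum_i\phi(B_i)\prod_{j\ne i}\phi(B_j-v)+c\,\lambda\prod_j\phi(B_j-v)$ for an integer constant $c$. Finally I would sort the blocks into the $l$ negative and $k-l$ positive ones: a negative $B_i$ contributes $\phi(\tilde{K}_r)\,\phi(\tilde{K}_{r-1})^{l-1}\phi(K_{r-1})^{k-l}$, a positive $B_i$ contributes $\phi(K_r)\,\phi(\tilde{K}_{r-1})^{l}\phi(K_{r-1})^{k-l-1}$, and $\prod_j\phi(B_j-v)=\phi(\tilde{K}_{r-1})^{l}\phi(K_{r-1})^{k-l}$; this reproduces the two summands $l(\cdots)$ and $(k-l)(\cdots)$ of the claim together with the final $\lambda$-term.

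The delicate point, and the step I expect to be the main obstacle, is pinning down the coefficient $c$ of $\lambda\prod_j\phi(B_j-v)$. It arises from the leading $-\lambda\prod_i\det D_i$ together with the $\lambda$-linear pieces of the substituted cross-terms $b_i^{T}\operatorname{adj}(D_i)b_i=-\lambda\phi(B_i-v)-\phi(B_i)$, so $c$ is a pure integer fixed by the number of blocks and independent of their signs; I would determine it by an explicit count and then calibrate it on a minimal instance, e.g.\ the case $r=2$ where $G=K_{1,k}$ and $\phi(G)$ is known in closed form, which guards against an off-by-one in the bookkeeping. As a cross-check in the spirit of the paper, the same formula can be re-derived from Theorem \ref{thm1}: in a block graph every directed cycle of the Coates digraph lies inside a single block, so the linear subdigraphs factor over the blocks according to whether $v$ is a loop or lies on a $2$-cycle or a longer cycle, and matching this combinatorial count against the algebraic expansion is an independent way to validate the coefficient.
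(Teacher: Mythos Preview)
Your Schur-complement/arrow-matrix approach is sound and genuinely different from the paper's, which simply invokes the $\mathcal{B}$-partition procedure of \cite{singh2017characteristic}: one product term for each way of attaching the cut-vertex $v$ to a single block (yielding the $l(\cdots)$ and $(k-l)(\cdots)$ summands), and one further term when $v$ is attached to no block (yielding the $\lambda$-term). Your route is self-contained, needing only the block-diagonal structure of the bordered matrix and the single-block identity $\phi(B_i)=-\lambda\,\phi(B_i-v)-b_i^{T}\mathrm{adj}(D_i)\,b_i$; the paper's route is shorter but leans on an external black box.

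Two remarks. First, there is nothing delicate about $c$: your own substitution determines it outright. Inserting $b_i^{T}\mathrm{adj}(D_i)\,b_i=-\lambda\,\phi(B_i-v)-\phi(B_i)$ into $-\lambda\prod_i\phi(B_i-v)-\sum_i\bigl(b_i^{T}\mathrm{adj}(D_i)\,b_i\bigr)\prod_{j\ne i}\phi(B_j-v)$ produces $k$ copies of $+\lambda\prod_j\phi(B_j-v)$ against the single leading $-\lambda\prod_j\phi(B_j-v)$, hence
\[
\phi(G)=\sum_{i=1}^{k}\phi(B_i)\prod_{j\ne i}\phi(B_j-v)\;+\;(k-1)\,\lambda\prod_{j=1}^{k}\phi(B_j-v),
\]
so $c=k-1$, not $1$. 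Second, your proposed calibration on $K_{1,k}$ (the case $r=2$, $l=0$) detects exactly this: $\phi(K_{1,k})=(-\lambda)^{k-1}(\lambda^{2}-k)$ agrees with $c=k-1$, whereas the coefficient $1$ appearing in the theorem as stated would give $(-\lambda)^{k-1}\bigl((k-1)\lambda^{2}-k\bigr)$, which is wrong for every $k\ge 3$. Thus your argument is correct and in fact exposes a slip in the displayed formula; the paper's one-line appeal to $\mathcal{B}$-partitions has evidently dropped the factor $k-1$ on the ``unassigned'' term (the two coincide only when $k=2$).
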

\begin{proof} Let $v$ be the only possible cut-vertex. Using the $\mathcal{B}$-partitions (\cite{singh2017characteristic}, Procedure 1) of $G$, when the cut-vertex $v$ associates with exactly one clique, it gives the following two product terms. 
$$l\Big(\phi(\tilde{K_r})\phi(\tilde{K}_{r-1})^{l-1}\phi({K}_{r-1})^{k-l}\Big)+(k-l)\Big(\phi({K_r})\phi(\tilde{K}_{r-1})^{l}\phi({K}_{r-1})^{k-l-1}\Big).$$ When the cut-vertex does not associates to any clique, it give the following product term.
$$\lambda\Big( \phi(\tilde{K}_{r-1})^{l}\phi({K}_{r-1})^{k-l} \Big).$$
Combining these product terms the result follows. 
\end{proof}
The eigenvalues of $K_n$ are $-1, (n-1)$ while the eigenvalues of $\tilde{K}_n$ are $1, 1-n$, with multiplicities $(n-1), 1$, respectively. Hence,
$$\phi(K_n)=(-1-\lambda)^{n-1}(n-1-\lambda),\ \phi(\tilde{K}_n)=(1-\lambda)^{n-1}(1-n-\lambda) $$

By Theorem \ref{lthm}
$$\phi(G)=l\Big(\phi(\tilde{K_r})\phi(\tilde{K}_{r-1})^{l-1}\phi({K}_{r-1})^{k-l}\Big)+(k-l)\Big(\phi({K_r})\phi({K}_{r-1})^{k-l-1}\phi(\tilde{K}_{r-1})^{l}\Big)+\lambda\Big( \phi(\tilde{K}_{r-1})^{l}\phi({K}_{r-1})^{k-l} \Big),$$
that is,
\begin{eqnarray*}
\phi(G)=\Big((1-\lambda)^{r-2}(2-r-\lambda)\Big)^{l-1}\Big((-1-\lambda)^{r-2}(r-2-\lambda)\Big)^{k-l-1} \times \\ \Bigg(l\Big((1-\lambda)^{r-1}(1-r-\lambda)(-1-\lambda)^{r-2}(r-2-\lambda)\Big)+(k-l)\Big((-1-\lambda)^{r-1}(r-1-\lambda)(1-\lambda)^{r-2}(2-r-\lambda)\Big)+\\ \lambda   \Big((1-\lambda)^{r-2}(2-r-\lambda)(-1-\lambda)^{r-2}(r-2-\lambda)\Big) \Bigg) 
\end{eqnarray*}

Thus, the eigenvalues of $G$ are $1, 2-r, -1, r-2$ with multiplicities $(r-2)(l-1),\ (l-1), \ (r-2)(k-l-1),\ (k-l-1)$, respectively, and the rest of the eigenvalues are given by the roots of the following polynomial.
\begin{eqnarray*}
\Bigg(l\Big((1-\lambda)^{r-1}(1-r-\lambda)(-1-\lambda)^{r-2}(r-2-\lambda)\Big)+(k-l)\Big((-1-\lambda)^{r-1}(r-1-\lambda)(1-\lambda)^{r-2}(2-r-\lambda)\Big)+\\ \lambda   \Big((1-\lambda)^{r-2}(2-r-\lambda)(-1-\lambda)^{r-2}(r-2-\lambda)\Big) \Bigg).
\end{eqnarray*}

\textbf{Acknowledgment.} The research of the first author was supported by the research fellowship of IIT
Jodhpur. The second author acknowledges support from the JC Bose Fellowship, Department of Science and
Technology, Government of India. The authors are grateful to Prof. Thomas Zaslavsky for his valuable comments and suggestions.

\bibliographystyle{plain}
    \bibliography{SB}
\end{document}